\tiny\color{gray},
\theoremstyle{definition}
\newtheorem{definition}{Definition}[section]
\newtheorem{theorem}[definition]{Theorem}
\newtheorem{lemma}[definition]{Lemma}
\newtheorem{proposition}[definition]{Proposition}
\newtheorem{example}{Example}[section]
\newtheorem{remark}{Remark}[section]
\newcommand{\Lean}[1]{\texttt{#1}}
\newcommand{\R}{\mathbb{R}}
\newcommand{\Icc}[2]{[#1, #2]}
\newcommand{\Ioo}[2]{(#1, #2)}
\newcommand{\Fin}{\text{Fin}}
\newcommand{\RVector}{\text{RVector}}
\newcommand{\Iccn}{\text{Icc\_n}}
\newcommand{\Ioon}{\text{Ioo\_n}}
\newcommand{\indicatorURO}{\text{indicatorUpperRightOrthant}}
\newcommand{\survivalProbN}{\text{survivalProbN}}
\newcommand{\mixedVec}{\text{mixedVector}}
\newcommand{\riemannStieltjesND}{\text{riemannStieltjesIntegralND}}
\newcommand{\allGt}{\text{allGt}}
\newcommand{\tactic}[1]{\textcolor{blue}{\texttt{#1}}}
\newcommand{\lean}[1]{\texttt{#1}}
\newcommand{\indicator}[1]{\mathbf{1}_{#1}}
\title{Geometric Formalization of First-Order Stochastic Dominance in $N$ Dimensions: \\ A Tractable Path to Multi-Dimensional Economic Decision Analysis}
\author{Jingyuan Li\thanks{Email: \texttt{jingyuanli@ln.edu.hk}. Department of Operations and Risk Management, Lingnan University.}}
\date{May 17, 2025}
\begin{document}
\maketitle

\begin{abstract}
This paper introduces and formally verifies a novel geometric framework for first-order stochastic dominance (FSD) in $N$ dimensions using the Lean 4 theorem prover. Traditional analytical approaches to multi-dimensional stochastic dominance rely heavily on complex measure theory and multivariate calculus, creating significant barriers to formalization in proof assistants. Our geometric approach characterizes $N$-dimensional FSD through direct comparison of survival probabilities in upper-right orthants, bypassing the need for complex integration theory. We formalize key definitions and prove the equivalence between traditional FSD requirements and our geometric characterization. This approach achieves a more tractable and intuitive path to formal verification while maintaining mathematical rigor. We demonstrate how this framework directly enables formal analysis of multi-dimensional economic problems in portfolio selection, risk management, and welfare analysis. The work establishes a foundation for further development of verified decision-making tools in economics and finance, particularly for high-stakes domains requiring rigorous guarantees.
\end{abstract}

\begin{flushleft}
\textbf{Keywords:} Geometric Stochastic Dominance, Formal Verification, Lean 4, Interactive Theorem Proving, Multi-Dimensional Decision Theory, Portfolio Selection, Risk Management, Welfare Analysis,  Certified Economic Modeling

\end{flushleft}
\smallskip
\begin{flushleft}
\textbf{JEL Classification:} D81 (Decision-Making under Risk and Uncertainty), C65 (Miscellaneous Mathematical Tools), C63 (Computational Techniques)
\end{flushleft}

\newpage

\section{Introduction}\label{sec:Intro}
Decision-making under uncertainty is a cornerstone of economic theory. Stochastic dominance (SD) offers a robust and widely accepted framework for comparing risky prospects without requiring precise specification of utility functions. When one prospect stochastically dominates another, it provides unambiguous guidance for rational decision-making, making it a powerful tool across economics, finance, and welfare analysis\cite{HadarRussell1969}\cite{HanochLevy1969}\cite{LevyParouch1974}\cite{Scarsini1988}\cite{RussellSeo1989}\cite{Gollier2001}\cite{MullerStoyan2002}
\cite{Denuitetal2005}\cite{ShakedShanthikumar2007}\cite{DenuitMesfioui2010}\cite{Denuitetal2013}\cite{Levy2015}.

While one-dimensional SD is well-understood and relatively straightforward to analyze, extending SD concepts to $N$ dimensions—where outcomes are vectors of attributes—introduces significant mathematical and computational challenges. Traditional approaches rely heavily on measure theory, multivariate calculus, and complex integration techniques, making formalization in proof assistants particularly difficult. This complexity limits the application of formal verification to multi-dimensional economic decision problems, precisely where formal guarantees would be most valuable due to the high stakes and complexity involved.

This paper confronts this challenge by introducing a novel geometric approach to formalize and verify $N$-dimensional first-order stochastic dominance (FSD) within the Lean 4 theorem prover. Instead of directly translating traditional measure-theoretic definitions, we develop a geometric characterization based on probabilities over orthants and prove its equivalence to standard FSD conditions. This approach substantially reduces the formalization overhead while maintaining mathematical rigor, enabling tractable formal verification of complex economic decision problems.

The main contributions of this work are:
\begin{enumerate}[leftmargin=1.5em, itemsep=0.5em]
    \item \textbf{A Novel Geometric Framework for $N$-Dimensional FSD:} We develop and formalize in Lean 4 a characterization of FSD based on $N$-dimensional orthant indicator functions ($\indicatorURO$) and survival probabilities. This geometric approach provides an intuitive interpretation of FSD as the comparison of probabilities of exceeding arbitrary threshold vectors across all dimensions simultaneously.

    \item \textbf{Formally Verified Equivalence Proofs:} We provide formally verified proofs in Lean 4 for the equivalence between these geometric characterizations of FSD (i.e., higher survival probabilities) and the traditional expected utility characterizations. These proofs ensure that our geometric framework preserves the essential economic properties of stochastic dominance.

    \item \textbf{Demonstration of Enhanced Tractability in Formalization:} We illustrate how this geometric methodology simplifies the formalization process within Lean 4 compared to traditional analytical approaches, significantly reducing the technical overhead required for formal verification.

    \item \textbf{Elucidation of Direct Applicability to Economic Problems:} We explore the direct applicability of this formally verified geometric FSD framework to complex, multi-dimensional economic decision problems in portfolio selection, risk management, welfare analysis, and emerging areas such as data privacy and certified systems.
\end{enumerate}

By making the underlying mathematical structures more amenable to formal reasoning and verification, this geometric approach paves the way for increased rigor, reliability, and the development of certified decision-making tools across economic domains. It establishes a foundation for further development of verified libraries for economic analysis while simultaneously improving the accessibility of these formal methods to practitioners.

The rest of this paper is structured as follows: Section \ref{sec:lean_role} briefly discusses the role of the Lean 4 prover and the Mathlib library in this work. Section \ref{sec:fsd_1d} introduces the geometric formalization approach in the familiar one-dimensional setting. Section \ref{sec:geometric_framework_nd} extends this to the $N$-dimensional case, presenting our main geometric framework. Section \ref{sec:theorems_nd} states and proves key theorems about $N$-dimensional geometric FSD. Section \ref{sec:economic_applications} explores applications to economic problems, while Section \ref{sec:comparison_approaches} compares our approach with alternative formalization methods. Section \ref{sec:conclusion} concludes with a summary of our contributions, and Section \ref{sec:industry_impact} discusses broader industrial impacts.

\section{The Role of the Lean 4 Prover and Mathlib}\label{sec:lean_role}
The formalizations presented in this document were carried out using the Lean 4 interactive theorem prover \cite{Lean4Website}. Lean 4 is a functional programming language and a proof assistant based on dependent type theory, which provides a formal foundation for mathematical reasoning with computer-verified guarantees. Unlike traditional programming languages, Lean 4 enables users to state mathematical theorems and interactively develop formal proofs that are verified by the system's kernel. This ensures a level of mathematical rigor that surpasses what is typically achievable in conventional mathematical texts.

This work relies on Mathlib \cite{MathlibCommunity2020}, Lean's extensive, community-driven library of formalized mathematics. Mathlib provides foundational theories essential for this project, including:
\begin{itemize}[leftmargin=1.5em, itemsep=0.3em]
    \item Real number theory (\lean{$\mathbb{R}$}).
    \item Set theory, finite sets (\lean{Finset}), and interval notation (\lean{Set}, \lean{Icc}, \lean{Ioo}).
    \item Basic analysis concepts, although our geometric approach deliberately minimizes reliance on advanced integration theory.
    \item Foundations for probability and utility theory.
\end{itemize}
The process of formalization involves translating standard mathematical definitions and theorems into Lean's formal language and then interactively constructing proofs using Lean's tactic system. This requires precise specification of mathematical concepts, making implicit assumptions explicit, and building proofs in a step-by-step manner that the computer can verify. While this process is more demanding than traditional mathematical writing, it yields much stronger guarantees of correctness and can uncover subtle issues or implicit assumptions in established mathematical theories.

Our choice of Lean 4 over alternative proof assistants such as Coq \cite{CoqWebsite} or Isabelle/HOL \cite{IsabelleWebsite} was motivated by Lean's strong support for classical mathematics, its extensive mathematical library (Mathlib), and its growing adoption in formalizing economic theories. The geometric approach developed in this paper is particularly well-suited to Lean's capabilities, as it allows us to work with concrete, constructive definitions while leveraging classical reasoning where appropriate.

\section{Geometric Formalization of First-Order Stochastic Dominance in One Dimension}\label{sec:fsd_1d}
While our primary focus is on the $N$-dimensional case, we briefly outline the one-dimensional FSD formalization to introduce key concepts in a familiar setting. The geometric intuition, crucial for the multi-dimensional extension, is more readily understood here.

\subsection{Specialized Riemann-Stieltjes Integral for Indicator Functions}
The standard FSD equivalence theorem relates the condition $F(x) \leq G(x)$ for all $x$ (where $F, G$ are CDFs) to $E_F[u(X)] \geq E_G[u(X)]$ for all non-decreasing utility functions $u$. A common proof approach uses a specialized version of the Riemann-Stieltjes integral for indicator functions of the form $u(x) = \indicator{(x_0, \infty)}(x)$, where $\indicator{S}$ is the indicator function that equals 1 when the argument is in set $S$ and 0 otherwise. For such functions, the expected value can be directly computed as:
\[ E[u(X)] = \int_a^b \indicator{(x_0, \infty)}(x) \, d\text{Dist}(x) = \int_{x_0}^b 1 \, d\text{Dist}(x) = \text{Dist}(b) - \text{Dist}(x_0) = 1 - \text{Dist}(x_0), \]
assuming $x_0 \in (a,b)$.
Our formalization directly captures this specific calculation, bypassing the need for a general theory of Riemann-Stieltjes integration for this step.

\begin{definition}[Specialized Riemann-Stieltjes Integral for Indicator Functions]\label{def:Riemann-Stieltjes Integral_1D}
Let $u: \R \to \R$ be a utility function, $\text{Dist}: \R \to \R$ be a CDF, and $a, b \in \R$ with $a < b$. The specialized Riemann-Stieltjes integral of $u$ with respect to $\text{Dist}$ on $\Icc{a}{b}$ is defined as follows:

Let $P$ be the proposition that $u$ is an indicator function for some point $x_0$ in the open interval $(a,b)$:
\[ P \equiv \exists x_0 \in \Ioo{a}{b}, \forall x \in \Icc{a}{b}, u(x) = \begin{cases} 1 & \text{if } x > x_0 \\ 0 & \text{otherwise} \end{cases} \]
Using classical logic (specifically, \lean{Classical.propDecidable} to assert that $P$ is decidable, and \lean{Classical.choose} to extract the witness $x_0$ if $P$ holds)\footnote{Appendix \ref{classical reasoning} provides details on the use of classical reasoning in our formalization.}, we define:
\[ \lean{riemannStieltjesIntegral}(u, \text{Dist}, a, b) :=
   \begin{cases}
     1 - \text{Dist}(\text{Classical.choose}(P)) & \text{if } P \text{ holds} \\
     0             & \text{if } P \text{ does not hold (placeholder value)}
   \end{cases}
\]
\end{definition}

\begin{remark}[Formalization Note]
This definition is tailored for the FSD proof. It only yields the intended expected value when $u$ is precisely of the form $\indicator{(x_0, \infty)}(x)$ for some $x_0 \in \Ioo{a}{b}$. The use of \lean{Classical.propDecidable} and \lean{Classical.choose} makes this definition non-constructive, but sufficient for our theoretical equivalence proofs. While general Riemann-Stieltjes integration could be formalized, our specialized approach significantly reduces the formalization overhead while capturing the essential behavior needed for the FSD equivalence theorem.
\end{remark}

\begin{example}[Calculating the Specialized Integral]
Let $a=0, b=10$. Let $\text{Dist}(x) = x/10$ for $x \in [0,10]$ (a uniform CDF on $[0,10]$).
Consider the utility function $u(x) = \indicator{(3, \infty)}(x)$, i.e., $u(x)=1$ if $x>3$ and $0$ otherwise.
Here, the proposition $P$ is true, with $x_0 = 3 \in \Ioo{0}{10}$.
Then, $\lean{Classical.choose}(P)$ would yield $x_0=3$.
The integral is calculated as:
\[ \lean{riemannStieltjesIntegral}(u, \text{Dist}, 0, 10) = 1 - \text{Dist}(3) = 1 - (3/10) = 7/10. \]
This is $P(X > 3)$. If $u(x)$ was, for example, $u(x)=x^2$, then proposition $P$ would be false, and the integral definition would yield $0$.
\end{example}

To ensure this definition is well-behaved and that the $x_0$ chosen via \lean{Classical.choose} is unique (up to the behavior of $u$ on $\Icc{a}{b}$), we establish the following lemma.

\begin{lemma}[Uniqueness of Indicator Point (1D)]\label{lemma:uniqueness_1D}
Suppose $a < b$. If $u_1(x) = \indicator{(x_1, \infty)}(x)$ and $u_2(x) = \indicator{(x_2, \infty)}(x)$ agree for all $x \in \Icc{a}{b}$, where $x_1, x_2 \in \Ioo{a}{b}$, then $x_1 = x_2$.
\end{lemma}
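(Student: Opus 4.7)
The plan is to proceed by contradiction using a trichotomy on $x_1$ and $x_2$. Suppose, for contradiction, that $x_1 \neq x_2$. By symmetry of the hypothesis (swapping the roles of $u_1$ and $u_2$), we may assume without loss of generality that $x_1 < x_2$. The goal is then to exhibit a single point $y \in \Icc{a}{b}$ at which $u_1(y) \neq u_2(y)$, contradicting the agreement hypothesis.

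For the witness I would take the midpoint $y := (x_1 + x_2)/2$. Since both $x_1$ and $x_2$ lie in $\Ioo{a}{b}$, the open interval is convex as a subset of $\R$, so $y \in \Ioo{a}{b} \subseteq \Icc{a}{b}$; this justifies applying the agreement hypothesis at $y$. Moreover, by the strict inequality $x_1 < x_2$ together with elementary arithmetic, $x_1 < y < x_2$.

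From $y > x_1$ the defining formula for $u_1$ gives $u_1(y) = 1$, while from $y < x_2$ (equivalently, $\neg(y > x_2)$) the defining formula for $u_2$ gives $u_2(y) = 0$. But the hypothesis $u_1(y) = u_2(y)$ forces $1 = 0$, a contradiction. Hence $x_1 = x_2$.

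The argument is essentially routine; the only non-trivial input is the density of $\R$, used to produce the midpoint strictly between $x_1$ and $x_2$. In a Lean formalization, the main bookkeeping obstacle is not mathematical but procedural: we must split the proof into the two symmetric cases $x_1 < x_2$ and $x_2 < x_1$ (reducing from $x_1 \neq x_2$ via \lean{lt\_or\_gt\_of\_ne}), and in each case discharge the membership $y \in \Icc{a}{b}$ before invoking the pointwise agreement hypothesis. The indicator-value computations then follow by unfolding and \lean{if\_pos}/\lean{if\_neg} reasoning.
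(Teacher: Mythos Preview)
Your proposal is correct and follows essentially the same approach as the paper's proof: both argue by contradiction, split on the trichotomy $x_1 < x_2$ versus $x_2 < x_1$ (you invoke WLOG informally but then acknowledge the Lean version must handle both cases, which matches the paper), and use the midpoint $(x_1+x_2)/2$ as the separating witness. The only cosmetic difference is that you justify membership of the midpoint in $\Ioo{a}{b}$ via convexity, whereas the paper spells out the chain of inequalities directly.
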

\begin{proof}
Formal proof sketch provided in Appendix \ref{proof:lemma:uniqueness_1d}. The proof proceeds by contradiction, assuming $x_1 \neq x_2$ (e.g., $x_1 < x_2$) and evaluating the functions at a point between $x_1$ and $x_2$, such as their midpoint. This yields a contradiction since the indicator functions would produce different values at this point.
\end{proof}

\begin{remark}[Verification Perspective]
Lemma \ref{lemma:uniqueness_1D} is crucial for the logical consistency of Definition \ref{def:Riemann-Stieltjes Integral_1D}. It ensures that if a function $u$ matches the required indicator form $\indicator{(x_0, \infty)}$ for some $x_0 \in \Ioo{a}{b}$, then this $x_0$ is unique. Thus, regardless of which witness \lean{Classical.choose} selects, the result of our integral calculation will be correct. This obviates potential ambiguity that could arise from the non-constructive nature of the definition.
\end{remark}

\begin{lemma}[Integral Calculation for Indicator Functions (1D)]\label{lemma:integral_indicator_1D}
If $u(x) = \indicator{(x_0, \infty)}(x)$ for a specific $x_0 \in \Ioo{a}{b}$ (meaning $u(x)=1$ if $x>x_0$ and $0$ otherwise, for $x \in \Icc{a}{b}$), then $\lean{riemannStieltjesIntegral}(u, \text{Dist}, a, b) = 1 - \text{Dist}(x_0)$.
\end{lemma}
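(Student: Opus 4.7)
The plan is to reduce the claim directly to the first branch of Definition \ref{def:Riemann-Stieltjes Integral_1D} and then use Lemma \ref{lemma:uniqueness_1D} to identify the chosen witness with the given $x_0$. Specifically, the hypothesis provides an explicit $x_0 \in \Ioo{a}{b}$ with $u(x) = \indicator{(x_0,\infty)}(x)$ for every $x \in \Icc{a}{b}$, so the existential proposition $P$ appearing in the definition is witnessed and therefore holds. Consequently, the definition evaluates to $1 - \text{Dist}(\text{Classical.choose}(P))$, and the only remaining task is to show that $\text{Classical.choose}(P) = x_0$ (or at least that $\text{Dist}$ takes the same value at both points).

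First I would unfold the definition of $\lean{riemannStieltjesIntegral}(u,\text{Dist},a,b)$ and supply the pair $(x_0, \text{hyp})$ as a direct witness for $P$, where $\text{hyp}$ is the given description of $u$. This discharges the \lean{Classical.propDecidable} branch condition and reduces the goal to
\[
1 - \text{Dist}(\text{Classical.choose}(P)) = 1 - \text{Dist}(x_0),
\]
which by cancellation is equivalent to $\text{Dist}(\text{Classical.choose}(P)) = \text{Dist}(x_0)$. Next I would set $x_0' := \text{Classical.choose}(P)$ and invoke \lean{Classical.choose\_spec} to extract the fact that $x_0' \in \Ioo{a}{b}$ and that $u(x) = \indicator{(x_0',\infty)}(x)$ for all $x \in \Icc{a}{b}$.

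At this point both representations $u(x) = \indicator{(x_0,\infty)}(x)$ and $u(x) = \indicator{(x_0',\infty)}(x)$ hold pointwise on $\Icc{a}{b}$, with both $x_0, x_0' \in \Ioo{a}{b}$, so Lemma \ref{lemma:uniqueness_1D} applies and yields $x_0 = x_0'$. Rewriting with this equality closes the goal.

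The main obstacle is not mathematical but bookkeeping: one must carefully thread the classical witness through the definition so that \lean{Classical.choose\_spec} produces exactly the hypothesis needed to feed Lemma \ref{lemma:uniqueness_1D}. In particular, care is required because the specification extracted from \lean{Classical.choose} is an existential inside a universal (the membership $x_0' \in \Ioo{a}{b}$ plus a pointwise equality on $\Icc{a}{b}$), and matching its shape precisely to the hypotheses of the uniqueness lemma is the delicate step. Once the witness is aligned, the remainder is routine rewriting.
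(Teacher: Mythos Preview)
Your proposal is correct and follows essentially the same approach as the paper: establish that $P$ holds with $x_0$ as witness, unfold the definition to the first branch, extract the properties of $x_0' := \text{Classical.choose}(P)$ via \lean{Classical.choose\_spec}, and then apply Lemma~\ref{lemma:uniqueness_1D} to conclude $x_0' = x_0$ before rewriting. Your remark about the main obstacle being the bookkeeping of aligning the shape of the choose-spec with the hypotheses of the uniqueness lemma is also exactly the point the paper's proof sketch emphasizes.
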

\begin{proof}
Formal proof sketch provided in Appendix \ref{proof:lemma:integral_indicator_1d}. The proof shows that the condition $P$ in Definition \ref{def:Riemann-Stieltjes Integral_1D} holds with $x_0$ as the explicit witness. Then, using Lemma \ref{lemma:uniqueness_1D}, we prove that \lean{Classical.choose(P)} must equal $x_0$, ensuring the integral evaluates to $1 - \text{Dist}(x_0)$.
\end{proof}

\begin{remark}[Purpose]
This lemma formally verifies that Definition \ref{def:Riemann-Stieltjes Integral_1D} correctly computes the expected value $1-\text{Dist}(x_0)$ when $u$ is indeed an indicator function of the form $\indicator{(x_0, \infty)}$. It provides the essential step for our subsequent FSD equivalence theorem, connecting the geometric properties of distributions ($1-\text{Dist}(x_0)$ is the probability of exceeding threshold $x_0$) with the expected utility framework.
\end{remark}

With these definitions and lemmas, we can state and prove the FSD equivalence theorem for this class of indicator functions.

\begin{theorem}[FSD Equivalence for Indicator Functions (1D)]\label{thm:fsd_iff_1D}
Given CDFs $F, G$ on $\Icc{a}{b}$ with $F(a)=G(a)=0$ and $F(b)=G(b)=1$.
The condition $F(x) \leq G(x)$ for all $x \in \Icc{a}{b}$ holds if and only if for all $x_0 \in \Ioo{a}{b}$,
\[ \lean{riemannStieltjesIntegral}(\indicator{(x_0, \infty)}, F, a, b) \geq \lean{riemannStieltjesIntegral}(\indicator{(x_0, \infty)}, G, a, b). \]
\end{theorem}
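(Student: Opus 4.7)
My plan is to reduce the theorem directly to Lemma~\ref{lemma:integral_indicator_1D}, which rewrites each specialized Riemann--Stieltjes integral as a simple algebraic expression in the CDF value at the threshold. Once this rewriting is performed, the equivalence collapses to an almost trivial pointwise inequality on $F$ and $G$; the only genuine work is bridging the open interval $(a,b)$ on which the integral inequality is stated with the closed interval $[a,b]$ on which we must conclude pointwise dominance, and this is handled by the boundary conditions $F(a)=G(a)=0$ and $F(b)=G(b)=1$.

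For the forward direction ($\Rightarrow$), I would fix an arbitrary $x_0 \in \Ioo{a}{b}$. Since $(a,b) \subset [a,b]$, the hypothesis yields $F(x_0) \leq G(x_0)$, hence $1 - F(x_0) \geq 1 - G(x_0)$. Two applications of Lemma~\ref{lemma:integral_indicator_1D} (once with $\mathrm{Dist} = F$, once with $\mathrm{Dist} = G$) rewrite the two integrals as exactly these two quantities, closing this direction.

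For the reverse direction ($\Leftarrow$), I would fix $x \in \Icc{a}{b}$ and split into three cases. If $x \in \Ioo{a}{b}$, instantiate the hypothesis at $x_0 = x$; Lemma~\ref{lemma:integral_indicator_1D} again converts the integral inequality into $1 - F(x) \geq 1 - G(x)$, which rearranges to $F(x) \leq G(x)$. The endpoints $x=a$ and $x=b$ are handled directly by the boundary hypotheses: $F(a) = 0 = G(a)$ and $F(b) = 1 = G(b)$ both give equality, hence the required weak inequality.

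I do not expect a genuine obstacle here; the difficulty has been pushed into Lemma~\ref{lemma:integral_indicator_1D} (and, behind it, Lemma~\ref{lemma:uniqueness_1D}, which guarantees that the classically-chosen witness in Definition~\ref{def:Riemann-Stieltjes Integral_1D} really does reproduce $x_0$). The only minor subtlety is the mismatch between the open quantifier $\forall x_0 \in \Ioo{a}{b}$ in the integral condition and the closed-interval conclusion $\forall x \in \Icc{a}{b}$ in the CDF condition; addressing this via the boundary conditions is precisely why those hypotheses appear in the theorem statement. In Lean, the proof should reduce to an \texttt{Iff.intro} followed in each branch by \texttt{intro}, a rewrite using Lemma~\ref{lemma:integral_indicator_1D}, and \texttt{linarith}, together with a small case split on whether $x$ lies at an endpoint.
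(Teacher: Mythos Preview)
Your proposal is correct and follows essentially the same approach as the paper: both directions are reduced to Lemma~\ref{lemma:integral_indicator_1D}, with the forward direction handled by restricting the pointwise hypothesis to $x_0\in\Ioo{a}{b}$ and the backward direction handled by a case split into the interior (where the lemma applies) and the two endpoints (where the boundary conditions $F(a)=G(a)=0$, $F(b)=G(b)=1$ give equality). Your anticipated Lean structure (\texttt{Iff.intro}, rewrites via the lemma, \texttt{linarith}, endpoint case split) matches the paper's formal proof sketch almost step for step.
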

\begin{proof}
Formal proof sketch provided in Appendix \ref{proof:thm:fsd_iff_1d}.

($\Rightarrow$): Assume $F(x) \leq G(x)$ for all $x \in \Icc{a}{b}$. For any $x_0 \in \Ioo{a}{b}$, let $u_0(x) = \indicator{(x_0, \infty)}(x)$. By Lemma \ref{lemma:integral_indicator_1D}, the integral with respect to $F$ is $1-F(x_0)$, and the integral with respect to $G$ is $1-G(x_0)$. From our assumption, $F(x_0) \leq G(x_0)$, so $1-F(x_0) \geq 1-G(x_0)$, establishing the integral inequality.

($\Leftarrow$): Assume the integral inequality holds for all $x_0 \in \Ioo{a}{b}$. For any such $x_0$, applying Lemma \ref{lemma:integral_indicator_1D} to both sides gives $1 - F(x_0) \geq 1 - G(x_0)$, which simplifies to $F(x_0) \leq G(x_0)$. For the boundary cases $x_0=a$ and $x_0=b$, we use the given conditions $F(a)=G(a)=0$ and $F(b)=G(b)=1$, yielding $F(x) \leq G(x)$ for all $x \in \Icc{a}{b}$.
\end{proof}

\begin{remark}[Significance]
This theorem, formally verified in Lean, confirms the standard result that FSD (in terms of CDFs) is equivalent to higher expected utility for the specific class of "greater than $x_0$" indicator utility functions. This is a foundational step toward our $N$-dimensional geometric framework, establishing that comparing survival probabilities ($1-F(x_0)$) is equivalent to CDF comparisons in the one-dimensional case.
\end{remark}
\section{Geometric Framework for $N$-Dimensional Stochastic Dominance}\label{sec:geometric_framework_nd}
The true power and tractability of the geometric approach become particularly evident in $N$ dimensions. Here, traditional multi-dimensional integration and measure theory introduce significant complexity for formalization. Our geometric approach replaces this with a more direct characterization based on orthants and combinatorial principles.

\begin{figure}[H]
\centering
\begin{tikzpicture}[scale=0.85]
  \draw[->] (0,0) -- (5,0) node[right] {$X_1$};
  \draw[->] (0,0) -- (0,5) node[above] {$X_2$};

  \filldraw (2,3) circle (2pt) node[below right] {$x_0=(x_{0,1},x_{0,2})$};

  \fill[blue!10] (2,3) -- (5,3) -- (5,5) -- (2,5) -- cycle;
  \draw[dashed] (2,3) -- (2,5);
  \draw[dashed] (2,3) -- (5,3);

  \node at (3.5,4) {Upper-right orthant: $X_1>x_{0,1}, X_2>x_{0,2}$};
  \node at (3.5,1) {$P(X_1>x_{0,1}, X_2>x_{0,2}) = \survivalProbN(\text{Dist}, x_0, b)$};
  \node at (4,2) {$\indicatorURO(x_0,x) = 1$};
\end{tikzpicture}
\caption{Geometric interpretation of the upper-right orthant in 2D stochastic dominance. The survival probability measures the probability mass in the shaded region where all components exceed their threshold values.}
\label{fig:orthant_2d}
\end{figure}
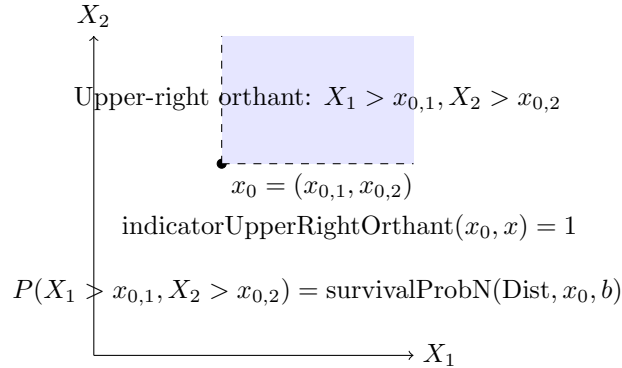

\begin{definition}[N-dimensional Vector of Reals]\label{def:Nd_Vector}
An $N$-dimensional vector of reals, denoted $\RVector(n)$ in our Lean formalization, is a function from $\Fin(n)$ to $\R$, where $\Fin(n)$ represents the finite set $\{0, 1, \dots, n-1\}$. This corresponds to the standard $\R^n$ but uses a function representation that facilitates formal reasoning in Lean.
\end{definition}
\begin{example}
If $n=3$, a vector $x \in \RVector(3)$ could be $x = (x_0, x_1, x_2)$, where $x(0)=x_0, x(1)=x_1, x(2)=x_2$. For instance, $v = (1.0, 2.5, -0.5)$ is in $\RVector(3)$.
\end{example}

\begin{definition}[Vector Relations]\label{def:vector relations}
Let $x, y \in \RVector(n)$.
\begin{itemize}
    \item \textbf{Componentwise less than ($<$)}: $x < y \iff \forall i \in \Fin(n), x(i) < y(i)$.
    \item \textbf{Componentwise less than or equal ($\leq$)}: $x \leq y \iff \forall i \in \Fin(n), x(i) \leq y(i)$.
    \item \textbf{Componentwise strictly greater than ($\allGt$)}: $\allGt(x, y) \iff \forall i \in \Fin(n), x(i) > y(i)$. (This is equivalent to $y < x$).
\end{itemize}
\end{definition}
\begin{example}
Let $x=(1,2), y=(3,4), z=(1,5) \in \RVector(2)$.
Then $x < y$ since $1<3$ and $2<4$.
However, $x \not< z$ because $x(0) \not< z(0)$ (i.e., $1 \not< 1$). But $x \leq z$ holds.
$\allGt(y,x)$ is true since $y(0)>x(0)$ and $y(1)>x(1)$.
\end{example}

\begin{definition}[N-dimensional Rectangles]\label{def:N-dimensional Rectangles}
Let $a, b \in \RVector(n)$.
\begin{itemize}
    \item \textbf{Closed N-dimensional Rectangle ($\Iccn$)}:
    $\Iccn(a,b) := \{x \in \RVector(n) \mid a \leq x \land x \leq b \}$.
    This means for all $i \in \Fin(n)$, $a(i) \leq x(i) \leq b(i)$. This represents the hyperrectangle $[a_0, b_0] \times \dots \times [a_{n-1}, b_{n-1}]$.
    \item \textbf{Open N-dimensional Rectangle ($\Ioon$)}:
    $\Ioon(a,b) := \{x \in \RVector(n) \mid a < x \land x < b \}$.
    This means for all $i \in \Fin(n)$, $a(i) < x(i) < b(i)$. This represents the hyperrectangle $(a_0, b_0) \times \dots \times (a_{n-1}, b_{n-1})$.
\end{itemize}
\end{definition}
\begin{example}
If $n=2$, $a=(0,0)$, $b=(1,2)$.
$\Iccn(a,b) = \{ (x_0, x_1) \mid 0 \leq x_0 \leq 1 \land 0 \leq x_1 \leq 2 \}$.
$\Ioon(a,b) = \{ (x_0, x_1) \mid 0 < x_0 < 1 \land 0 < x_1 < 2 \}$.
The point $(0.5, 1.5)$ is in $\Ioon(a,b)$ and $\Iccn(a,b)$. The point $(1,1)$ is in $\Iccn(a,b)$ but not $\Ioon(a,b)$.
\end{example}

\begin{definition}[Special Vector Constructions]\label{def:Special Vector Constructions}
Let $x_0, b, x \in \RVector(n)$, $s \subseteq \Fin(n)$ be a finite set of indices (formalized as \lean{Finset (Fin n)} in Lean), $j \in \Fin(n)$, and $\text{val} \in \R$.
\begin{itemize}
    \item \textbf{Mixed Vector ($\mixedVec$)}: $\mixedVec(x_0, b, s)$ is a vector where components at indices in $s$ are taken from $x_0$, and components at indices not in $s$ are taken from $b$.
    \[ (\mixedVec(x_0, b, s))(i) = \begin{cases} x_0(i) & \text{if } i \in s \\ b(i) & \text{if } i \notin s \end{cases} \]
    This construction is fundamental for defining the $N$-dimensional survival probability using the principle of inclusion-exclusion, as it allows us to specify points at the "corners" of orthants.
    \item \textbf{Replace Component ($\text{replace}$)}: $\text{replace}(x, j, \text{val})$ is a vector identical to $x$ except that its $j$-th component is replaced by $\text{val}$.
    \[ (\text{replace}(x, j, \text{val}))(i) = \begin{cases} \text{val} & \text{if } i = j \\ x(i) & \text{if } i \neq j \end{cases} \]
    This is a standard utility in vector manipulations.
    \item \textbf{Midpoint Vector ($\text{midpoint}$)}: $(\text{midpoint}(x,y))(i) = (x(i) + y(i))/2$ for all $i \in \Fin(n)$. This is a non-computable definition in Lean due to real division not being computable, but it serves as a useful theoretical construct for our proofs.
\end{itemize}
\end{definition}
\begin{example}[Mixed Vector]
Let $n=3$, $x_0 = (1,2,3)$, $b=(10,11,12)$. Let $s = \{0, 2\} \subseteq \Fin(3)$.
Then $\mixedVec(x_0, b, s)$ will take $x_0(0)$ and $x_0(2)$ and $b(1)$:
$(\mixedVec(x_0, b, s))(0) = x_0(0) = 1$
$(\mixedVec(x_0, b, s))(1) = b(1) = 11$
$(\mixedVec(x_0, b, s))(2) = x_0(2) = 3$
So, $\mixedVec(x_0, b, s) = (1, 11, 3)$.
\end{example}

\begin{definition}[Indicator Function for Upper-Right Orthant]\label{def:indicatorURO_ND}
The function $\indicatorURO : \RVector(n) \to \RVector(n) \to \R$ is defined as:
\[ \indicatorURO(x_0, x) := \begin{cases} 1 & \text{if } \allGt(x, x_0) \\ 0 & \text{otherwise} \end{cases} \]
This function is pivotal: it identifies whether a point $x$ strictly "dominates" a threshold point $x_0$ in all dimensions. The utility function $u(x) = \indicatorURO(x_0, x)$ represents a preference for outcomes that exceed threshold $x_0$ in every dimension, reflecting an economically meaningful preference structure.
\end{definition}
\begin{example}
Let $n=2$, $x_0 = (1,1)$.
If $x=(2,3)$, then $\allGt(x,x_0)$ is true (since $2>1$ and $3>1$), so $\indicatorURO(x_0, x) = 1$.
If $x=(0,4)$, then $\allGt(x,x_0)$ is false (since $0 \not> 1$), so $\indicatorURO(x_0, x) = 0$.
If $x=(2,1)$, then $\allGt(x,x_0)$ is false (since $1 \not> 1$), so $\indicatorURO(x_0, x) = 0$.
\end{example}

\begin{definition}[$N$-dimensional Survival Probability]\label{def:survivalProbN_ND}
The survival probability, denoted $\survivalProbN(\text{Dist}, x_0, b)$, for a joint distribution function $\text{Dist}: \RVector(n) \to \R$ (where $\text{Dist}(x) = P(X_0 \leq x_0, \dots, X_{n-1} \leq x_{n-1})$), threshold vector $x_0 \in \RVector(n)$, and upper bound vector $b \in \RVector(n)$, is defined as:

\[ \survivalProbN(\text{Dist}, x_0, b) := 1 - \sum_{s \in \mathcal{P}(\Fin(n)) \setminus \{\emptyset\}} (-1)^{|s|+1} \cdot \text{Dist}(\mixedVec(x_0, b, s)), \]
where $\mathcal{P}(\Fin(n))$ is the power set of indices $\Fin(n)$, and $\mixedVec(x_0, b, s)$ constructs a vector whose components at indices in $s$ are taken from $x_0$ and components at indices not in $s$ are taken from $b$.

This formula computes the probability that all components of a random vector $X$ exceed their respective thresholds in $x_0$:

\[ P(X > x_0) = P(X_0 > x_0(0), \dots, X_{n-1} > x_0(n-1)). \]

Using the principle of inclusion-exclusion, this can be equivalently expressed as:

\[ \survivalProbN(\text{Dist}, x_0, b) := \sum_{s \subseteq \Fin(n)} (-1)^{|s|} \cdot \text{Dist}(\mixedVec(x_0, b, s)). \]
\end{definition}
\begin{remark}[Formalization Note on $\survivalProbN$]
The definition of $\survivalProbN$ uses \lean{Finset.powerset} and \lean{Finset.sum} from Lean. The vector $b$ plays the role of the "upper anchor" for the probability calculation. For a standard CDF $F$ in two dimensions, with $b=(b_0,b_1)$ where $b_i$ represent upper bounds of the support:

Our $\mixedVec(x_0, b, s)$ constructs the arguments for $F$. For $s=\{0\}$, $\mixedVec(x_0,b,\{0\})=(x_{0,0}, b_1)$. For $s=\{1\}$, $\mixedVec(x_0,b,\{1\})=(b_0, x_{0,1})$. For $s=\{0,1\}$, $\mixedVec(x_0,b,\{0,1\})=(x_{0,0},x_{0,1})$.

The sum is:
$(-1)^{1+1} F(x_{0,0}, b_1) + (-1)^{1+1} F(b_0, x_{0,1}) + (-1)^{2+1} F(x_{0,0}, x_{0,1})$.
So $\survivalProbN = 1 - (F(x_{0,0},b_1) + F(b_0,x_{0,1}) - F(x_{0,0},x_{0,1}))$. This is indeed the standard formula for $P(X_0 > x_{0,0}, X_1 > x_{0,1})$.
This combinatorial definition is highly amenable to formal proof, as it relies on set theory and algebraic manipulation rather than analytic limits or measure theory for its basic properties.
\end{remark}

\begin{definition}[$N$-dimensional Riemann-Stieltjes Integral for Orthant Indicators]\label{def:riemannStieltjesND_ND}
The specialized Riemann-Stieltjes integral $\riemannStieltjesND(u, \text{Dist}, a, b, x_0, h_{x_0}, h_u)$ is defined for a function $u: \RVector(n) \to \R$, a distribution $\text{Dist}$, vectors $a, b, x_0 \in \RVector(n)$, a proof $h_{x_0}$ that $x_0 \in \Iccn(a,b)$, and a proof $h_u$ that $u(x) = \indicatorURO(x_0, x)$ for all $x \in \Ioon(a,b)$.
It evaluates to:
\[ \begin{cases} \survivalProbN(\text{Dist}, x_0, b) & \text{if } \exists x_0' \in \Iccn(a,b) \text{ s.t. } \forall x \in \Ioon(a,b), u(x) = \indicatorURO(x_0',x) \\ 0 & \text{otherwise} \end{cases} \]
The condition uses \Lean{Classical.propDecidable}. The definition in Lean directly uses $x_0$ from the arguments if the condition (checked via $h_u$ and $h_{x_0}$) holds.
\end{definition}

\begin{remark}[Verification Perspective]
Similar to the 1D case, this definition is tailored. The hypotheses $h_{x_0}$ and $h_u$ ensure that the function $u$ indeed corresponds to an orthant indicator defined by $x_0$ within the relevant domain. The definition takes advantage of the uniqueness of such $x_0$ (established in Lemma \ref{lem:uniqueness_ND}) to avoid unnecessary non-constructive operations when the threshold parameter $x_0$ is already known. This design enhances both the logical clarity and the tractability of subsequent proofs.
\end{remark}

\section{Key Theorems for $N$-Dimensional Geometric FSD}\label{sec:theorems_nd}
\subsection{Main Results}
The geometric framework allows for clear and formally verifiable statements about $N$-dimensional FSD, building upon the definitions above.

\begin{lemma}(Subset Relation between Rectangles)\label{lem:Ioo_subset_Icc_self_n}
For any $a, b \in \RVector(n)$, $\Ioon(a,b) \subseteq \Iccn(a,b)$.
\end{lemma}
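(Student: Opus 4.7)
The plan is to do exactly the expected thing: unfold the definitions of $\Ioon$ and $\Iccn$ from Definition \ref{def:N-dimensional Rectangles}, then reduce the componentwise strict inequalities to componentwise non-strict ones. No real ideas are needed; the argument is essentially a tautology, so the only "work" is routing through the function-representation of vectors.

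Concretely, I would start by introducing an arbitrary $x \in \Ioon(a,b)$ and unfolding the definition to obtain two hypotheses $a < x$ and $x < b$ (componentwise, per Definition \ref{def:vector relations}). To show $x \in \Iccn(a,b)$, I need $a \le x$ and $x \le b$, which by Definition \ref{def:vector relations} again reduce to $\forall i \in \Fin(n),\ a(i) \le x(i)$ and $\forall i \in \Fin(n),\ x(i) \le b(i)$. After introducing an arbitrary index $i$, each goal follows from the corresponding strict inequality at $i$ via the standard real-number fact that $r < s$ implies $r \le s$ (\lean{le\_of\_lt} in Mathlib).

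There is no genuine obstacle here. The only Lean-level subtlety is that $\RVector(n)$ is a function type, so one must not forget to \lean{intro i} before applying \lean{le\_of\_lt}; doing it in the wrong order would leave a goal comparing functions rather than their values. In a paper-style proof this can be condensed to a single sentence: "If $a(i) < x(i) < b(i)$ for every $i$, then in particular $a(i) \le x(i) \le b(i)$ for every $i$, so $x \in \Iccn(a,b)$." This lemma is best viewed as an infrastructural fact used later whenever a hypothesis about the open rectangle needs to be weakened to one about the closed rectangle (for instance, when combining $h_{x_0}: x_0 \in \Iccn(a,b)$ with a witness from $\Ioon(a,b)$ in Definition \ref{def:riemannStieltjesND_ND}).
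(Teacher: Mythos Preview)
Your proposal is correct and mirrors the paper's own proof almost step for step: introduce $x$ with its membership hypothesis, split the conjunction via \tactic{constructor}, then for each half \tactic{intro i} and close with \tactic{le\_of\_lt} applied to the corresponding componentwise strict inequality. Even your remark about needing to introduce the index before invoking \lean{le\_of\_lt} matches the paper's treatment exactly.
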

\begin{proof}
Formal proof sketch provided in Appendix \ref{proof:lem:Ioo_subset_Icc_self_n}. This follows directly from $x < y \implies x \leq y$ applied componentwise.
\end{proof}
\begin{remark}[Formalization Note]
This is a foundational geometric property, easily proven in Lean by applying the definition of subset and vector relations. It's used frequently to ensure that points in an open rectangle are also contained in the corresponding closed rectangle, facilitating necessary type coercions in our proofs.
\end{remark}

\begin{lemma}[Uniqueness of Indicator Function Parameter ($N$D)]\label{lem:uniqueness_ND}
Let $a, b, x_1, x_2 \in \RVector(n)$ such that $\forall i, a(i) < b(i)$ (ensuring $\Ioon(a,b)$ is non-empty and $n$-dimensional).
Let $x_1 \in \Ioon(a,b)$ and $x_2 \in \Ioon(a,b)$. If for all $x \in \Ioon(a,b)$,
 $$ \indicatorURO(x_1, x) = \indicatorURO(x_2, x),$$
then $x_1 = x_2$.
\end{lemma}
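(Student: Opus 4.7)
The plan is to prove the contrapositive: assume $x_1 \neq x_2$ and construct an explicit witness $x \in \Ioon(a,b)$ at which $\indicatorURO(x_1, x) \neq \indicatorURO(x_2, x)$. Since $x_1 \neq x_2$ as functions on $\Fin(n)$, there is some coordinate $j \in \Fin(n)$ with $x_1(j) \neq x_2(j)$; by symmetry of the hypothesis in $x_1$ and $x_2$, we may assume $x_1(j) < x_2(j)$. The idea is to place $x$ just inside the orthant $\{y : y > x_1\}$ but outside $\{y : y > x_2\}$, with the separation occurring at coordinate $j$.

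Concretely, define $x \in \RVector(n)$ by
\[ x(i) := \begin{cases} (x_1(j) + x_2(j))/2 & \text{if } i = j, \\ (x_1(i) + b(i))/2 & \text{if } i \neq j. \end{cases} \]
Three verifications remain. First, $x \in \Ioon(a,b)$: at $i = j$, the midpoint lies in $(x_1(j), x_2(j)) \subseteq (a(j), b(j))$ because $x_1, x_2 \in \Ioon(a,b)$; at $i \neq j$, the midpoint lies in $(x_1(i), b(i)) \subseteq (a(i), b(i))$ since $x_1(i) > a(i)$. Second, $\allGt(x, x_1)$ holds: at $i = j$ we have $x(j) > x_1(j)$ by construction, and at $i \neq j$ we have $x(i) > x_1(i)$ since $b(i) > x_1(i)$. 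Third, $\allGt(x, x_2)$ fails at $i = j$, since $x(j) < x_2(j)$. Consequently $\indicatorURO(x_1, x) = 1$ while $\indicatorURO(x_2, x) = 0$, contradicting the hypothesis. Therefore $x_1 = x_2$.

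The conceptual argument is a direct multi-dimensional analogue of the one-dimensional uniqueness result (Lemma \ref{lemma:uniqueness_1D}): separating two threshold points by placing a test point between them in the one differing coordinate. I do not expect a genuine mathematical obstacle here; the friction will be entirely in formalization bookkeeping. In Lean, the main care points are (i) extracting the coordinate $j$ of disagreement via a \lean{Classical.choose} or \lean{Function.ne\_iff} style lemma on functions out of \lean{Fin n}, (ii) handling the WLOG step by a case split on \lean{lt\_or\_gt\_of\_ne} and invoking the hypothesis symmetrically, and (iii) discharging the numerous componentwise midpoint inequalities, which are each elementary but numerous enough that one would want a small tactic block (e.g.\ repeated \tactic{linarith} after \tactic{split\_ifs}) to dispatch them uniformly.
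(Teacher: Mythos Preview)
Your proof is correct and follows essentially the same approach as the paper: assume $x_1 \neq x_2$, pick a coordinate $j$ where they differ, case-split on the order of $x_1(j)$ and $x_2(j)$, and construct a test point whose $j$-th coordinate is the midpoint $(x_1(j)+x_2(j))/2$ while the remaining coordinates are pushed toward $b$. The only difference is that the paper sets the non-$j$ coordinates to $(\max(x_1(i),x_2(i)) + b(i))/2$ rather than your $(x_1(i)+b(i))/2$; your choice is slightly simpler and still suffices, since $\allGt(x,x_2)$ need only fail at coordinate $j$ and no lower bound against $x_2$ is required elsewhere.
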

\begin{proof}
Formal proof sketch provided in Appendix \ref{proof:lem:uniqueness_nd}. The proof is by contradiction. If $x_1 \neq x_2$, then they must differ in at least one component, say $x_1(j) \neq x_2(j)$. Assuming without loss of generality that $x_1(j) < x_2(j)$, we construct a point $z \in \Ioon(a,b)$ such that $x_1(j) < z(j) < x_2(j)$ and $z(i) > \max(x_1(i), x_2(i))$ for all $i \neq j$. This point $z$ satisfies $\allGt(z, x_1)$ but not $\allGt(z, x_2)$, so $\indicatorURO(x_1, z) = 1$ but $\indicatorURO(x_2, z) = 0$, contradicting the assumption that these functions are equal on $\Ioon(a,b)$.
\end{proof}
\begin{remark}[Verification Perspective]
This lemma is critical. It guarantees that if an orthant indicator utility function $u(x) = \indicatorURO(x_0,x)$ is specified over the domain $\Ioon(a,b)$, the parameter vector $x_0$ is unique. This justifies our use of classical reasoning in Definition \ref{def:riemannStieltjesND_ND}, ensuring that regardless of which $x_0'$ is chosen by \lean{Classical.choose}, the integral value $\survivalProbN(\text{Dist}, x_0', b)$ will be the same for any valid $x_0'$. In practice, our formalization avoids invoking \lean{Classical.choose} when $x_0$ is explicitly provided.
\end{remark}

\begin{lemma}[Integral for Indicator Function ($N$D)]\label{lem:integral_for_indicator_ND}
Let $a, b \in \RVector(n)$ with $\forall i, a(i) < b(i)$.
Let $u: \RVector(n) \to \R$, $\text{Dist}: \RVector(n) \to \R$, and $x_0 \in \RVector(n)$.
Assume $x_0 \in \Ioon(a,b)$ (hypothesis $h_{x_0{_mem}}$) and $\forall x \in \Iccn(a,b), u(x) = \indicatorURO(x_0, x)$ (hypothesis $h_{u{_def}}$).
Then, with $h_{x_0}' : x_0 \in \Iccn(a,b)$ (derived from $h_{x_0{_mem}}$ using Lemma \ref{lem:Ioo_subset_Icc_self_n}) and $h_u' : \forall x \in \Ioon(a,b), u(x) = \indicatorURO(x_0, x)$ (derived from $h_{u{_def}}$),
$$\riemannStieltjesND(u, \text{Dist}, a, b, x_0, h_{x_0}', h_u') = \survivalProbN(\text{Dist}, x_0, b).$$
\end{lemma}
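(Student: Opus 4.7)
The plan is to unfold Definition \ref{def:riemannStieltjesND_ND} directly and show that its guarding existential condition is satisfied, so that the computation reduces to the nontrivial branch yielding $\survivalProbN(\text{Dist}, x_0, b)$. No new analytic content is needed: the combinatorial inclusion-exclusion machinery underlying $\survivalProbN$ has already been absorbed into its definition, and Lemma \ref{lem:uniqueness_ND} will provide whatever well-definedness is required downstream.

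First I would establish that the existential
\[ \exists x_0' \in \Iccn(a,b),\ \forall x \in \Ioon(a,b),\ u(x) = \indicatorURO(x_0', x) \]
is witnessed by $x_0$ itself. The membership part $x_0 \in \Iccn(a,b)$ is precisely $h_{x_0}'$, obtained from $h_{x_0\text{-mem}}$ via Lemma \ref{lem:Ioo_subset_Icc_self_n}. The pointwise equality part is $h_u'$, obtained from $h_{u\text{-def}}$ by restricting the quantification domain from $\Iccn(a,b)$ to $\Ioon(a,b)$ using the same inclusion. With the existential witnessed, the \lean{Classical.propDecidable} test returns true and Definition \ref{def:riemannStieltjesND_ND} selects the nontrivial branch. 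Per the accompanying remark, the Lean implementation substitutes the explicitly supplied argument $x_0$ directly into $\survivalProbN(\text{Dist}, \cdot, b)$, so the target equation becomes a definitional simplification, dischargeable by \tactic{unfold}, \tactic{simp}, and the appropriate \texttt{dif\_pos} lemma. If instead the definition internally invoked \lean{Classical.choose} to extract a witness $\tilde{x}_0$, then Lemma \ref{lem:uniqueness_ND} forces $\tilde{x}_0 = x_0$ (since both induce the same orthant indicator on $\Ioon(a,b)$), and we would rewrite along this equality to recover $\survivalProbN(\text{Dist}, x_0, b)$.

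The main obstacle is administrative rather than mathematical: the proof requires careful handling of the decidability mechanism embedded in the definition and of the interplay between the $\Iccn$-level hypothesis $h_{u\text{-def}}$ and the $\Ioon$-level hypothesis $h_u'$ expected by the definition. One must also be attentive to the supplied proof arguments $h_{x_0}'$ and $h_u'$ being proof-irrelevant under Lean's framework so that the \texttt{dif\_pos} rewrite closes. Beyond this bookkeeping, the lemma says essentially that "the tailored definition does what it advertises when its side conditions hold," and the argument is correspondingly short.
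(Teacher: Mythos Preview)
Your proposal is correct and matches the paper's own proof essentially step for step: witness the existential with $x_0$ using $h_{x_0}'$ and $h_u'$ (both obtained via Lemma \ref{lem:Ioo_subset_Icc_self_n}), then unfold the definition and discharge with \texttt{dif\_pos}. Your additional remark about the \lean{Classical.choose}/uniqueness fallback is sound but unnecessary here, since the definition uses the supplied $x_0$ directly.
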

\begin{proof}
Formal proof sketch provided in Appendix \ref{proof:lem:integral_for_indicator_nd}. The proof directly applies Definition \ref{def:riemannStieltjesND_ND}. The hypotheses $h_{x_0{_mem}}$ and $h_{u{_def}}$ establish that $u$ is indeed an orthant indicator function defined by $x_0$. The definition of $\riemannStieltjesND$ then evaluates to $\survivalProbN(\text{Dist}, x_0, b)$ as required.
\end{proof}
\begin{remark}[Purpose]
This lemma is the cornerstone connecting the specialized integral definition to the geometric/combinatorial survival probability. It formally verifies that, for the specific class of orthant indicator utility functions, our specialized Riemann-Stieltjes integral correctly computes the survival probability. This establishes a bridge between the expected utility framework and the geometric interpretation of FSD through survival probabilities.
\end{remark}

\begin{theorem}[FSD Equivalence for Indicator Functions ($N$D)]\label{thm:fsd_equivalence_ND}
Let $F, G: \RVector(n) \to \R$ be joint distribution functions. Let $a,b \in \RVector(n)$ with $\forall i, a(i) < b(i)$.
Then, the condition
$$(\forall x_0 \in \Ioon(a,b), \survivalProbN(F, x_0, b) \geq \survivalProbN(G, x_0, b))$$
is equivalent to
$$(\forall (x_0 \in \Ioon(a,b)) (h_{x_0{_mem}} : x_0 \in \Ioon(a,b)),$$
$$  \riemannStieltjesND(\indicatorURO(x_0), F, a, b, x_0, h_{x_0}', h_u')$$
$$  \geq \riemannStieltjesND(\indicatorURO(x_0), G, a, b, x_0, h_{x_0}', h_u'),$$
where $h_{x_0}'$ is $x_0 \in \Iccn(a,b)$  and $h_u'$ is $\forall x \in \Ioon(a,b), u(x) = \indicatorURO(x_0, x)$ indicate the necessary proof arguments $h_{x_0}'$ and $h_u'$ as in Lemma \ref{lem:integral_for_indicator_ND}.
\end{theorem}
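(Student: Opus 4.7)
The plan is to reduce this equivalence to a direct application of Lemma \ref{lem:integral_for_indicator_ND}, which already establishes that for orthant indicator utility functions, the specialized Riemann-Stieltjes integral computes exactly the survival probability. Under this lemma, both sides of the integral inequality collapse to the corresponding survival probabilities, making the two conditions essentially equivalent by rewriting.

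First, for both directions I would fix an arbitrary $x_0 \in \Ioon(a,b)$ together with the necessary proof arguments $h_{x_0}'$ and $h_u'$. Since the utility function under consideration is $u = \indicatorURO(x_0, \cdot)$ by construction, the hypothesis $h_u' : \forall x \in \Ioon(a,b), u(x) = \indicatorURO(x_0, x)$ is satisfied trivially by \texttt{fun x \_ => rfl}, and $h_{x_0}' : x_0 \in \Iccn(a,b)$ is obtained from $x_0 \in \Ioon(a,b)$ via Lemma \ref{lem:Ioo_subset_Icc_self_n}. This preparation makes the hypotheses of Lemma \ref{lem:integral_for_indicator_ND} applicable for both distributions $F$ and $G$.

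For the forward direction, assume the survival probability inequality. Applying Lemma \ref{lem:integral_for_indicator_ND} to both $F$ and $G$, we get
\[
\riemannStieltjesND(\indicatorURO(x_0), F, a, b, x_0, h_{x_0}', h_u') = \survivalProbN(F, x_0, b),
\]
and likewise for $G$. The assumed inequality on survival probabilities then transfers immediately to the integrals. The reverse direction is symmetric: assuming the integral inequality, the same two applications of Lemma \ref{lem:integral_for_indicator_ND} rewrite both sides of the integral inequality into survival probabilities, producing the desired $\survivalProbN(F, x_0, b) \geq \survivalProbN(G, x_0, b)$.

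The main obstacle here is not mathematical but bureaucratic: one must carefully marshal the proof terms $h_{x_0}'$ and $h_u'$ through Lean's type-checker so that the two invocations of Lemma \ref{lem:integral_for_indicator_ND} refer to the same $x_0$ and the same hypotheses on both sides of the inequality. Once these are aligned, the proof is essentially a pair of \texttt{rw} (or \texttt{simp only [lem\_integral\_for\_indicator\_ND]}) steps followed by the assumed inequality. No new combinatorial or analytic content is required beyond what is already encapsulated in Lemma \ref{lem:integral_for_indicator_ND} and Lemma \ref{lem:uniqueness_ND}.
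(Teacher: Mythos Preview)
Your proposal is correct and follows essentially the same route as the paper: both directions are obtained by invoking Lemma~\ref{lem:integral_for_indicator_ND} for $F$ and for $G$ at the given $x_0$, rewriting the integrals into survival probabilities, and then using the assumed inequality. Your remarks about constructing $h_{x_0}'$ via Lemma~\ref{lem:Ioo_subset_Icc_self_n} and $h_u'$ via \texttt{rfl}, and about the main work being bookkeeping of proof terms rather than new mathematics, match the paper's treatment exactly.
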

\begin{proof}
Formal proof sketch provided in Appendix \ref{proof:thm:fsd_equivalence_nd}.
The proof follows directly from Lemma \ref{lem:integral_for_indicator_ND}.
($\Rightarrow$): Assume $\forall x_0 \in \Ioon(a,b), \survivalProbN(F, x_0, b) \geq \survivalProbN(G, x_0, b)$. For any $x_0 \in \Ioon(a,b)$, let $u_{x_0}(x) = \indicatorURO(x_0, x)$. By Lemma \ref{lem:integral_for_indicator_ND}, the specialized Riemann-Stieltjes integrals with respect to $F$ and $G$ are precisely $\survivalProbN(F, x_0, b)$ and $\survivalProbN(G, x_0, b)$. The inequality between the survival probabilities directly translates to the inequality between the integrals.

($\Leftarrow$): Assume the integral inequality for all $x_0 \in \Ioon(a,b)$ and their corresponding indicator functions. By Lemma \ref{lem:integral_for_indicator_ND}, this translates directly to $\survivalProbN(F, x_0, b) \geq \survivalProbN(G, x_0, b)$ for all $x_0 \in \Ioon(a,b)$, completing the proof.
\end{proof}

\begin{remark}[Central Result for Geometric FSD]
This theorem is the central result for our geometric formalization of $N$-dimensional FSD for orthant indicator functions. It formally establishes that dominance in terms of higher expected utility for orthant indicator utility functions is equivalent to higher survival probabilities in all upper-right orthants. This provides a clear geometric interpretation of FSD in $N$ dimensions: one distribution dominates another if and only if it has a higher probability of exceeding any given threshold vector in all dimensions simultaneously.
\end{remark}

\subsection{Extension to General Non-Decreasing Utility Functions}\label{subsec:extension_general}
The formal verification presented thus far specifically addresses orthant indicator functions of the form $\indicatorURO(x_0, x)$. However, in Sections 6 and 9, we make broader claims that the results extend to "all investors with non-decreasing utility functions" and "all policy evaluators with non-decreasing utility functions." This extension, while not formally verified in our Lean implementation, rests on well-established mathematical principles in stochastic dominance theory \cite{Levy2015}\cite{ShakedShanthikumar2007}.

\begin{proposition}[Extension to Non-Decreasing Utility Functions]\label{prop:extension}
If $F$ and $G$ are two joint distribution functions such that $\survivalProbN(F, x_0, b) \geq \survivalProbN(G, x_0, b)$ for all $x_0 \in \Ioon(a,b)$, then $E_F[u(X)] \geq E_G[u(X)]$ for all non-decreasing utility functions $u: \RVector(n) \to \R$.
\end{proposition}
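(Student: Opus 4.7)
The plan is to leverage the formally verified equivalence in Theorem \ref{thm:fsd_equivalence_ND} by expressing a general non-decreasing utility function $u$ as a non-negative superposition of orthant indicator functions $\indicatorURO(t, \cdot)$. Once $u$ admits such a representation, the expected-value inequality will follow from linearity of expectation combined with the pointwise survival-probability hypothesis.

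Concretely, I would first establish an integral representation of the form
\[ u(x) - u(a) = \int_{\Iccn(a,b)} \indicatorURO(t, x)\, d\mu_u(t), \]
where $\mu_u$ is a non-negative measure on $\Iccn(a,b)$ derived from the multivariate Stieltjes differential of $u$. For smooth $u$ with non-negative mixed partial derivative $\partial^n u/\partial x_1 \cdots \partial x_n$, this measure has that mixed derivative as its density; more generally, one constructs $\mu_u$ as the unique measure whose value on a box $(s, t]$ equals the $n$-th order mixed difference of $u$ on that box. The identity above is the $N$-dimensional analogue of the fundamental theorem of calculus, applied successively in each coordinate starting from the anchor $a$.

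Next I would take expectations with respect to $F$ and $G$ and apply Fubini/Tonelli (legitimate because $\mu_u$ is non-negative and $\indicatorURO$ is a non-negative bounded function) to interchange the expectation with the integral over $t$:
\[ E_F[u(X)] - u(a) = \int_{\Iccn(a,b)} E_F[\indicatorURO(t, X)]\, d\mu_u(t) = \int_{\Iccn(a,b)} \survivalProbN(F, t, b)\, d\mu_u(t), \]
and analogously for $G$. The hypothesis $\survivalProbN(F, t, b) \geq \survivalProbN(G, t, b)$ for all relevant $t$, together with the non-negativity of $\mu_u$, then yields $E_F[u(X)] \geq E_G[u(X)]$ by monotonicity of the Lebesgue integral.

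The main obstacle is the first step: rigorously constructing $\mu_u$ and verifying its non-negativity for a general non-decreasing $u$ in $N$ dimensions. The sharp notion required is not merely componentwise monotonicity but the stronger \emph{$\Delta$-monotonicity} (non-negativity of all $n$-th order mixed differences), which is precisely what guarantees $\mu_u \geq 0$; without it, one can construct componentwise non-decreasing $u$ for which the representation fails to give a positive measure. In a fully formal Lean development this step would demand either a smoothness assumption (reducing the argument to integration of a non-negative density) or the complete machinery of multivariate Lebesgue--Stieltjes measures together with a Fubini theorem for them. This analytical overhead is precisely what the geometric characterization was designed to sidestep, which is why the extension is presented here as a classical consequence rather than as a formally verified theorem in Lean.
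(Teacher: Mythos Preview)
The paper does not actually prove this proposition; it explicitly notes the extension is ``not formally verified in our Lean implementation'' and offers only a three-point outline: (i) approximate a non-decreasing $u$ by positive linear combinations of orthant indicators, (ii) invoke linearity of expectation on those combinations, and (iii) pass to a limit. Your route is the continuous version of the same idea: instead of discrete approximation followed by a limit argument, you represent $u-u(a)$ directly as an integral of $\indicatorURO(t,\cdot)$ against a Lebesgue--Stieltjes measure $\mu_u$ and then interchange integrals via Fubini/Tonelli. The two arguments are classically interchangeable; yours is more direct and avoids a separate convergence step, at the cost of needing the multivariate Lebesgue--Stieltjes construction up front, while the paper's sketch stays closer to the elementary ``generating class'' language of stochastic-dominance textbooks.

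More importantly, you isolate a genuine subtlety that the paper's sketch elides. Your observation that positivity of $\mu_u$ requires $\Delta$-monotonicity (non-negativity of all $n$-th order mixed differences) rather than mere componentwise monotonicity is exactly right, and it applies equally to the paper's step (i): a function that is coordinatewise non-decreasing but not $\Delta$-monotone (e.g.\ $u(x_1,x_2)=\max(x_1,x_2)$ in two dimensions) cannot be written as a \emph{positive} combination of upper-orthant indicators. In $n\ge 2$ the survival-probability hypothesis characterizes the upper orthant order, which is strictly weaker than the usual multivariate stochastic order; the utility class it governs is the $\Delta$-monotone class, not all coordinatewise increasing functions. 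So the proposition as stated is imprecise unless ``non-decreasing'' is read in that strong sense, and your proposal is the more careful treatment on this point.
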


The mathematical foundations for this extension include:

\begin{enumerate}[leftmargin=1.5em, itemsep=0.5em]
    \item \textbf{Approximation Theory:} Any non-decreasing utility function can be approximated arbitrarily closely by a positive linear combination of upper-right orthant indicator functions. This is analogous to how step functions can approximate continuous functions in one dimension.

    \item \textbf{Linearity of Expectation:} If $$E_F[\indicatorURO(x_0,X)] \geq E_G[\indicatorURO(x_0,X)]$$ for all $x_0$, then by linearity of the expectation operator, this inequality also holds for any positive linear combination of these indicator functions.

    \item \textbf{Limit Arguments:} Taking limits (under appropriate regularity conditions), the dominance relation extends to all non-decreasing utility functions that can be approximated by such combinations.
\end{enumerate}

In formal mathematics terms, the orthant indicator functions form a "generating class" for the space of non-decreasing utility functions, in the sense that their linear spans and limits can represent any such function \cite{RussellSeo1989}.

\begin{remark}[Formalization Scope]
While formalizing the full extension to all non-decreasing utility functions would be a valuable contribution, it would require significant additional machinery in Lean, including:
\begin{itemize}
    \item Development of measure-theoretic function approximation theorems
    \item Formalization of convergence theorems for expectations
    \item Construction of limit operations for utility function sequences
\end{itemize}
These extensions remain as promising directions for future formal verification work, building upon the geometric foundation established in this paper.
\end{remark}

For the economic applications discussed in subsequent sections, the indicator function characterization provides a clear, testable condition through survival probability comparisons. The extension to all non-decreasing utility functions then follows as a "free theorem" from standard stochastic dominance theory, making the geometric approach both theoretically complete and practically applicable.

\section{Economic Applications: A Geometric Perspective}\label{sec:economic_applications}
The geometric framework for $N$-dimensional FSD, formalized and verified in Lean 4, offers a robust and notably tractable foundation for analyzing a range of multi-dimensional economic decision problems. In this section, we explore concrete applications across several domains and highlight how the formalization enhances these analyses.

\subsection{Portfolio Selection}
In contemporary finance, portfolio selection extends beyond the classic mean-variance analysis of Markowitz \cite{Markowitz1952}. Investors often evaluate portfolios based on multiple criteria, such as expected returns across different economic scenarios, risk measures across various time horizons, or environmental, social, and governance (ESG) metrics alongside financial performance.
Let a portfolio $P$'s outcome be represented by an $N$-dimensional random vector $X_P = (X_1, X_2, \dots, X_N)$, where each $X_i$ is a desirable attribute.

A portfolio $A$ exhibits first-order stochastic dominance over portfolio $B$ under our geometric framework if, for any target outcome vector $x_0 = (x_{0,1}, \dots, x_{0,N})$, the probability that portfolio $A$ exceeds all targets simultaneously is at least as high as the probability for portfolio $B$:
$P(X_A > x_0) \geq P(X_B > x_0)$.

\begin{example}[Multi-Attribute Portfolio Choice]
Consider two portfolios with bivariate distributions of (return, downside protection). Portfolio $A$ has a higher probability than portfolio $B$ of simultaneously achieving any given threshold combination of return and downside protection. By our geometric FSD framework, portfolio $A$ dominates portfolio $B$, making it preferable for all investors with non-decreasing utility functions over both attributes.
\end{example}

\textbf{Benefits and Enhanced Applications from Geometric Formalization:}
\begin{itemize}
    \item \textbf{Rigor for Complex Criteria:} The geometric framework provides a formal, verifiable basis for portfolio selection involving multiple, possibly correlated criteria without requiring explicit utility functions. This is particularly valuable for institutional investors who must justify decisions across multiple objectives.
    \item \textbf{Verifiable Robo-Advising:} Automated investment advisors can implement FSD checks on portfolios with formal guarantees, enhancing trust in recommendation algorithms.
    \item \textbf{Executable FSD Checkers for Empirical Distributions:} The constructive nature of the geometric proofs, especially for distributions represented as step-functions (empirical CDFs from historical data), enables the development of verified algorithms that can directly check for FSD between portfolios.
        \begin{itemize}
            \item \textbf{Application Example (Portfolio Screening):} A financial institution can use Lean to develop a verified program that decides whether an empirical CDF of returns from a candidate portfolio is dominated by an existing portfolio. The formal guarantees ensure that the screening process correctly implements FSD checks.
        \end{itemize}
    \item \textbf{Automated Verification for Parametric Models with Satisfiability Modulo Theories (SMT) Solvers:} If portfolio returns are modeled by parametric distributions, FSD conditions might translate into algebraic inequalities that can be verified through SMT solvers integrated with Lean.
        \begin{itemize}
            \item \textbf{Application Example:} A firm designs a structured product whose multi-attribute payoff depends on parameters $\alpha, \beta$. They can formally verify in Lean (potentially with SMT solver support) that for certain parameter ranges, the product's payoff distribution dominates a benchmark, providing formal guarantees to clients.
        \end{itemize}
    \item \textbf{Scalability for High-Dimensional Attributes:} The geometric approach, relying on combinatorics (inclusion-exclusion) rather than complex multi-dimensional integration theory (Fubini/Tonelli theorems), scales more effectively to higher dimensions in formal verification contexts.
\end{itemize}
This approach complements works like \cite{DentchevaRuszczynski2006} on stochastic dominance constraints in portfolio optimization, by providing a framework for formally verifying the underlying dominance relations.

\subsection{Risk Management}
Financial institutions manage multifaceted risks, represented as vectors $X = (X_1, \dots, X_N)$ of desirable outcomes (e.g., $X_1$=capital adequacy, $X_2$=liquidity ratio). Strategy $S_A$ FSDs $S_B$ if for all threshold vectors $x_0$, $P(X_{S_A} > x_0) \geq P(X_{S_B} > x_0)$.

\begin{example}[Comparing Hedging Strategies]
A bank compares two hedging strategies for managing interest rate and credit risks. Strategy $A$ stochastically dominates strategy $B$ in our geometric framework, meaning that for any target threshold combinations of interest rate protection and credit risk mitigation, strategy $A$ has a higher probability of simultaneously exceeding both thresholds. This provides a clear justification for preferring strategy $A$ without requiring specific utility functions over these risk dimensions.
\end{example}

\textbf{Benefits and Enhanced Applications from Geometric Formalization:}
\begin{itemize}
    \item \textbf{Unambiguous Model Comparison:} The geometric FSD condition provides clear, interpretable criteria for comparing risk management models across multiple risk dimensions.
    \item \textbf{Regulatory Confidence and Certified Compliance:} Regulators might demand formal proof that a new financial product or risk management strategy FSDs a baseline scenario across multiple risk dimensions.
        \begin{itemize}
            \item \textbf{Application Example:} A bank develops a new internal model for assessing operational risk across $N$ categories. To gain regulatory approval, they provide a Lean-verified proof that their model's risk distribution FSDs the standardized approach across all relevant risk dimensions, demonstrating superior risk management with mathematical certainty.
        \end{itemize}
    \item \textbf{FSD Constraints in Optimization Solvers:} Risk management often involves optimizing resource allocation (e.g., capital, hedging instruments) subject to risk constraints. FSD conditions on multivariate outcomes can be incorporated as verified constraints in optimization problems.
        \begin{itemize}
            \item \textbf{Application Example:} An insurer wants to design a reinsurance program by choosing among various contracts. The goal is to minimize reinsurance cost while ensuring that the post-reinsurance risk profile FSDs a regulatory benchmark across multiple risk categories. The geometric formulation enables more direct incorporation of these constraints in the optimization model.
        \end{itemize}
    \item \textbf{Certified Monte Carlo for Stress Testing:} When evaluating risk under stress scenarios using Monte Carlo simulations, the geometric FSD framework aligns directly with counting simulated outcomes in specific orthants, making verification more tractable.
        \begin{itemize}
            \item \textbf{Application Example:} A bank performs $M$ Monte Carlo simulations for two different investment portfolios under a stress scenario, yielding $M$ vectors of $N$ P\&L figures for each portfolio. A Lean-verified checker can confirm whether one portfolio stochastically dominates the other by directly implementing the geometric survival probability comparisons.
        \end{itemize}
\end{itemize}

\subsection{Welfare Analysis and Policy Evaluation}
Societal well-being or policy impact is often multi-dimensional ($W = (W_1, \dots, W_N)$). Policy $A$ FSDs policy $B$ if $\survivalProbN(F_A, w_0, b) \geq \survivalProbN(F_B, w_0, b)$ for any target welfare vector $w_0$, meaning $A$ has a higher probability than $B$ of simultaneously achieving all welfare thresholds.

\begin{example}[Evaluating Social Programs]
Two healthcare reforms are being compared based on their impacts on three dimensions: access to care, quality of care, and cost reduction. Reform $A$ FSDs reform $B$ in our geometric framework if the probability of simultaneously achieving any given threshold combination across all three dimensions is higher under reform $A$ than reform $B$. This provides a robust basis for policy selection without requiring explicit trade-offs between these objectives.
\end{example}

\textbf{Benefits and Enhanced Applications from Geometric Formalization:}
\begin{itemize}
    \item \textbf{Transparent Policy Choice:} The geometric condition $P(W > w_0)$ is visual and directly interpretable to policymakers as the probability of exceeding all welfare targets simultaneously.
    \item \textbf{Robustness to Utility Specification:} The FSD criterion is valid for all policy evaluators with non-decreasing utility functions over the welfare dimensions, avoiding contentious assumptions about specific social welfare functions.
    \item \textbf{Rigorous Impact Assessment for Interacting Policies:} Our framework accommodates joint distributions where welfare dimensions may be correlated, capturing interaction effects between policy components that might be missed in dimension-by-dimension analyses.
    \item \textbf{Pedagogical Clarity:} The geometric approach, linking FSD to probabilities of exceeding targets in hyperrectangles, is more accessible to students and policymakers who may know calculus but not measure theory, broadening the potential audience for formal methods in policy analysis.
    \item \textbf{Verified Simulation of Policy Impacts:} Similar to stress testing, if the impact of social policies is estimated via agent-based models or microsimulations yielding multi-dimensional outcome distributions, FSD can be assessed directly from these simulated datasets with formal guarantees.
\end{itemize}

\subsection{New Application Area: Formal Guarantees in Data Privacy}
The concept of "dominance" can be extended to information leakage in data privacy mechanisms, such as those aiming for differential privacy \cite{Dwork2006}. Consider $N$ different types of sensitive information. Let $X_i$ represent the information leakage (a value to be minimized) for type $i$ under privacy mechanism $M_A$, and $Y_i$ the leakage under mechanism $M_B$. If for all threshold vectors $x_0$, $P(X < x_0) \ge P(Y < x_0)$, then mechanism $A$ provides superior privacy protection across all dimensions.
Alternatively, if $X_i$ represents the utility preserved for attribute $i$ under privacy mechanism $M_A$, and $Y_i$ under $M_B$, then $M_A$ FSDs $M_B$ if $P(X > x_0) \ge P(Y > x_0)$ for all utility thresholds $x_0$, indicating better utility preservation.

\textbf{Benefits of Geometric Formalization:}
\begin{itemize}
    \item \textbf{Precise Multi-Attribute Privacy Guarantees:} The $N$-dimensional geometric framework can precisely define and verify privacy guarantees across multiple types of data or queries simultaneously, advancing beyond single-metric privacy analyses.
    \item \textbf{Reduced Technical Overhead for Verification:} Proving such multi-dimensional privacy properties without heavy measure-theoretic machinery makes formal verification more accessible to privacy researchers, potentially accelerating the development and verification of privacy-preserving algorithms.
\end{itemize}

\subsection{New Application Area: Certified Libraries and Embedded Systems}
The reduced dependency on heavy measure theory libraries makes the geometric FSD formalization suitable for inclusion in certified numerical libraries or systems where code size and auditability are crucial.

\textbf{Benefits of Geometric Formalization:}
\begin{itemize}
        \item \textbf{Lightweight Certified Components:} A verified FSD checker based on geometric principles can be a small, self-contained module. This is advantageous for embedded systems in finance (e.g., trading devices) and safety-critical applications where code verification is essential.
    \item \textbf{Foundation for Verified Complex Systems:} Verified FSD can be a building block in larger verified systems. For example, a formally verified dynamic programming solver could use this framework to ensure that decisions satisfy stochastic dominance requirements without requiring the full machinery of measure theory.
\end{itemize}
\section{Comparison with Analytical and Other Formal Approaches}\label{sec:comparison_approaches}
The traditional mathematical treatment of stochastic dominance, particularly in multiple dimensions, is deeply rooted in measure theory and the calculus of multi-dimensional integration \cite{Levy2015}. While these analytical approaches provide theoretical rigor, they introduce significant formalization challenges in interactive theorem proving environments.

Our geometric approach, as presented in this paper, offers a distinct and, for the specific goal of verifying FSD and enabling its direct applications, a more tractable alternative for formalization within Lean 4 and similar systems. The key differences and advantages include:

\begin{itemize}[leftmargin=1.5em, itemsep=0.5em]
    \item \textbf{Simplified Mathematical Primitives:} Instead of confronting general integrals and arbitrary measurable sets from the outset, our framework focuses on geometrically intuitive objects: upper-right orthants, hyperrectangles, and inclusion-exclusion calculations on discrete sets. This significantly reduces the theoretical prerequisites for formalization while maintaining the essential mathematical properties needed for economic applications.

    \item \textbf{Lower Technical Overhead in Formalization:} A direct consequence is a significant reduction in the formalization overhead.
        \begin{itemize}
            \item There is no direct need to define or manipulate $\sigma$-algebras, prove measurability for numerous functions and sets, or reason about almost-everywhere equalities for the core FSD theorems.
            \item Such self-contained formalizations tend to compile faster and place less strain on Lean's kernel. They are also often easier to maintain and adapt across future versions of Mathlib's evolving integration and measure theory libraries.
        \end{itemize}

    \item \textbf{Enhanced Automation and SMT-Friendliness:} Proofs within the geometric framework frequently reduce to verifying systems of linear (or sometimes bilinear) real arithmetic inequalities.
        \begin{itemize}
            \item Lean's built-in tactics like `linarith`, `polyrith`, `ring`, and `positivity` are highly effective for such goals.
            \item Furthermore, these types of arithmetic problems are often well-suited for external SMT solvers \cite{deMoura2008}, which can be integrated with Lean through frameworks like `smt\_tactic`.
            \item This contrasts sharply with measure-theoretic proofs, where automation often struggles with goals like "show this set is measurable" or "show this function is integrable," which require extensive manual guidance and domain-specific expertise.
        \end{itemize}

    \item \textbf{More Direct Path to $N$-Dimensional Scalability:} The traditional route to multivariate FSD requires formalizing product $\sigma$-algebras and Fubini's or Tonelli's theorems to handle multiple integrals. Our combinatorial approach using inclusion-exclusion principles scales to $N$ dimensions with comparatively less formalization overhead, making it practical to handle higher-dimensional problems.

    \item \textbf{Potential for Constructive Proofs and Code Extraction:} Because geometric proofs often rely on explicit inequalities and constructions on rectangles or intervals, the conclusions can frequently be stated in more computationally tractable ways.
        \begin{itemize}
            \item This opens the possibility of extracting executable code directly from the verified theorems (e.g., a program that decides FSD for distributions represented by step-functions or empirical CDFs).
            \item In contrast, some measure-theoretic results might be non-constructive or rely on classical axioms in ways that make direct code extraction more challenging or less meaningful.
        \end{itemize}
        
     \item \textbf{Lean-Specific Implementation Insights:} Working in Lean 4 specifically, We've found that:
        \begin{itemize}
            \item The \lean{Finset} operations used for survival probability calculations align well with Lean's computation capabilities and tactic-based automation
            \item The \lean{Classical.propDecidable} and \lean{Classical.choose} constructs provide a pragmatic way to handle the existence requirements while maintaining logical consistency
            \item Lean's dependent type system allows us to express the complex geometric conditions with precise types that carry proof information (like our \lean{riemannStieltjesND} function that takes proof arguments)
        \end{itemize}  
    
\end{itemize}

Compared to other formal verification efforts in economics or finance that might aim to formalize stochastic dominance by directly translating the standard analytical definitions (relying heavily on Mathlib's integration theory), our geometric approach offers a more modular and focused pathway. It establishes the essential properties of FSD needed for economic applications without requiring the formalization of the full measure-theoretic foundations. This strategic choice enhances tractability while maintaining mathematical rigor for the specific goal of enabling formal analysis of multi-dimensional economic decision problems.

\section{Conclusion}\label{sec:conclusion}
This paper has introduced and demonstrated a novel geometric pathway to the formalization and verification of $N$-dimensional first-order stochastic dominance using the Lean 4 theorem prover. By strategically reformulating the traditional analytical characterizations of FSD into a geometric framework based on survival probabilities in upper-right orthants, we have achieved a significantly more tractable formalization while maintaining the essential mathematical properties needed for economic analysis.

The central message of this work, substantiated by our formal developments in Lean 4, is that \textbf{geometric methods can provide a significantly more tractable path to formalizing complex probabilistic concepts in economics}, particularly those involving multi-dimensional risk and welfare comparisons. This approach reduces the formalization overhead without compromising mathematical rigor, bridging the gap between theoretical economic concepts and formal verification technologies.

The benefits of this geometric approach, particularly from the perspective of formal verification and practical application, are manifold:
\begin{itemize}[leftmargin=1.5em, itemsep=0.5em]
    \item \textbf{Enhanced Formal Tractability and Maintainability:} The geometric definitions simplify the translation of mathematical concepts into Lean 4 and streamline the proof development process. They also reduce dependencies on evolving libraries for advanced measure theory, enhancing long-term maintainability.
     \item \textbf{Clarity, Intuition, and Pedagogical Transparency:} Geometric conditions, such as comparing probabilities of exceeding targets in specific orthants ($P(X > x_0)$), often provide a more intuitive and accessible framework for practitioners than traditional analytical formulations involving multiple integrals or complex measure-theoretic constructs.
    \item \textbf{Foundation for Verifiable Economic Models and Certified Tools:} This work provides formally verified building blocks for constructing more complex economic models and decision-support tools with mathematical guarantees. The reduced formalization overhead makes it practical to incorporate these verified components into larger systems.
    \item \textbf{Improved Automation in Proofs:} The reduction of many proof goals to real arithmetic makes them amenable to powerful automated tactics in Lean (`linarith`, `polyrith`, `ring`, `positivity`) and integration with external SMT solvers, increasing the efficiency of formal verification efforts.
\end{itemize}

Future research can extend this geometric framework in several promising directions. This includes formalizing higher-order stochastic dominance using related geometric ideas (e.g., based on integrals of survival functions), developing verified algorithms for checking stochastic dominance between empirical distributions, and extending the approach to more specialized forms of stochastic dominance relevant to specific economic applications. The combinatorial structure of our definitions also suggests potential connections to algorithmic game theory and computational economics that warrant further exploration.

By demonstrating the feasibility and advantages of a geometric approach to a cornerstone concept in decision theory, this work aims to encourage further exploration of how strategic mathematical reformulations can enhance the tractability of formal verification in economics. The resulting formally verified theorems provide a foundation for more reliable and transparent economic analysis in high-stakes domains where rigorous guarantees are increasingly essential.

\section{Broader Industrial Impact and Applications}\label{sec:industry_impact}

While the formal verification of $N$-dimensional stochastic dominance provides significant theoretical contributions to mathematical economics, its applications extend into various industries where multi-dimensional decision-making under uncertainty is critical. This section explores the potential impact of our work beyond academic settings.

\subsection{Impact on Traditional Industries}\label{subsec:trad_industries}

The geometric framework for $N$-dimensional FSD formalized in this paper offers practical value to several industries:

\begin{itemize}[leftmargin=1.5em, itemsep=0.5em]
    \item \textbf{Financial Services:} Certified multi-dimensional portfolio comparisons provide mathematical guarantees for investment decisions. Asset management firms can implement verified algorithms to identify stochastically dominant strategies across multiple risk-return metrics, enhancing client trust through rigorously verified selection methodologies.

    \item \textbf{FinTech and Algorithmic Trading:} Our framework enables implementation of verified comparison operators in trading systems. Robo-advisor platforms can build trustworthy multi-criteria recommendation engines with formal guarantees that their suggestions are optimal for all clients with non-decreasing utility functions over relevant attributes.

    \item \textbf{Insurance Industry:} Property \& casualty insurers can leverage this formalization for modeling multi-dimensional catastrophe risks. Reinsurance companies can formally verify complex treaty structures using verified stochastic dominance checkers, ensuring optimal risk transfer across multiple peril categories simultaneously.

    \item \textbf{Regulatory Technology:} Compliance solution providers can build verified tools for regulatory reporting with FSD-based certifications. Model validation teams can adopt formal methods to verify that internal risk models dominate standard regulatory approaches, potentially justifying reduced capital requirements.

    \item \textbf{Manufacturing and Supply Chain:} Multi-parameter production processes under uncertainty can be optimized with verified guarantees. Supply chain risk management systems can implement certified resilience metrics by analyzing stochastic dominance across multiple disruption scenarios and resource constraints.

    \item \textbf{Healthcare Economics:} Multi-dimensional risk-benefit analysis of treatments can be formalized, enabling more reliable medical decision support systems. Resource allocation algorithms for healthcare systems can incorporate verified stochastic dominance checks for comparing intervention strategies across multiple health outcomes and cost dimensions.
\end{itemize}

The key value proposition across these sectors is transitioning from empirical or approximation-based approaches to decision systems with formal guarantees—particularly critical in high-stakes domains where errors can have significant financial, regulatory, or human consequences.

\subsection{Transformative Potential for AI Systems}\label{subsec:ai_impact}

The formalization of N-dimensional stochastic dominance has particularly promising applications in artificial intelligence:

\begin{itemize}[leftmargin=1.5em, itemsep=0.5em]
    \item \textbf{Formally Verified Decision-Making:} AI systems can incorporate verified preference ordering in multi-attribute decision frameworks. This provides mathematical guarantees that algorithmic decisions respect stochastic dominance principles, even under uncertainty. The orthant indicator representation aligns naturally with AI systems that compute probabilities over regions of feature space.

    \item \textbf{Multi-objective Reinforcement Learning (RL):} Our framework enables certified algorithms for comparing multi-dimensional reward distributions, formalizing correctness of Pareto front approximations in reinforcement learning. As a Lean prover, I've found that the geometric approach to FSD yields particularly clean specifications for verification of multi-objective RL algorithms.

    \item \textbf{Verified Fairness and Robustness:} Formal verification of fairness properties across multiple stochastic attributes becomes tractable. AI systems can maintain provable fairness guarantees across distributions of outcomes affecting different demographic groups, with formal verification of these properties in Lean.

    \item \textbf{Enhanced Uncertainty Quantification:} The geometric approach enables verified propagation of multi-dimensional uncertainties through AI pipelines. This includes certified comparison of output distributions from different model architectures, providing formal guarantees about their relative performance characteristics.

    \item \textbf{Foundation Models and Reasoning:} Next-generation AI can incorporate verified reasoning about probabilistic outcomes and multi-criteria preferences. This provides building blocks for symbolic AI systems that make provably correct inferences about stochastic dominance relations in complex domains.

    \item \textbf{AI Risk Assessment Frameworks:} Multi-dimensional risk analysis with formal guarantees becomes implementable, allowing verified risk aggregation across multiple AI failure modes. This contributes to the development of provably safer AI systems by formalizing comparative risk analysis across multiple safety dimensions.r.
\end{itemize}

The geometric formalization approach is particularly well-suited for AI applications due to its reduced dependency on heavy measure theory, making it more amenable to lightweight implementation in constrained environments like edge AI systems or verified runtime monitors. The Lean 4 implementation offers a path to extracting verified code that can be integrated into AI decision pipelines.

The formal verification of N-dimensional stochastic dominance using the geometric approach represents a significant step toward more reliable decision-making systems across multiple industries, with particular promise for enhancing the trustworthiness of AI systems that make high-stakes decisions under uncertainty.

\newpage

\newpage
\appendix
\section{Appendix: Lean 4 Implementations and Proof Sketches}\label{sec:appendix}

Throughout this appendix, Lean 4 code snippets are illustrative of the formal definitions and theorems discussed in the main text. Proof sketches aim to convey the logical structure of the formal proofs while abstracting away some implementation details.

\subsection{Lean 4 Code Snippets}
The following snippets represent key definitions and theorem statements as formalized in Lean 4, utilizing the Mathlib library.

\textbf{Lean 4 Definition \ref{def:Riemann-Stieltjes Integral_1D} (Specialized Riemann-Stieltjes Integral for Indicators (1D)):}
\begin{lstlisting}[caption={Specialized Riemann-Stieltjes Integral for Indicators (1D)}, label={lst:rs_integral_1d}]
noncomputable def riemannStieltjesIntegral (u : \R \to \R) (Dist : \R \to \R) (a b : \R) : \R :=
  let P : Prop := \exists x\0 \in Ioo a b, \forall x \in Icc a b, u x = if x > x\0 then 1 else 0
  haveI : Decidable P := Classical.propDecidable P -- Asserts decidability using classical logic
  if hP : P then -- hP is a proof that P holds
    let x\0_witness := Classical.choose hP -- Extracts the witness x\0 from the proof hP
    1 - Dist x\0_witness
  else
    0 -- Placeholder if u is not of the specified indicator form
\end{lstlisting}

\textbf{Lean 4 Lemma \ref{lemma:uniqueness_1D} (Uniqueness of Indicator Point (1D)):}
\begin{lstlisting}[caption={Uniqueness of Indicator Point (1D)}, label={lst:1d_uniqueness}]
lemma uniqueness_of_indicator_x0_1D {a b x1 x2 : \R} (hab : a < b)
    (hx1_mem : x1 \in Ioo a b) (hx2_mem : x2 \in Ioo a b)
    (h_eq_fn : \forall x \in Icc a b, (if x > x1 then (1 : \R) else (0 : \R)) =
                                     (if x > x2 then (1 : \R) else (0 : \R))) :
    x1 = x2 :=
  sorry -- Formal proof omitted in snippet, sketch in Appendix \ref{proof:lemma:uniqueness_1d}
\end{lstlisting}

\textbf{Lean 4 Lemma \ref{lemma:integral_indicator_1D} (Integral Calculation for Indicator Functions (1D)):}
\begin{lstlisting}[caption={Integral Calculation for Indicator Functions (1D)}, label={lst:1d_Integral_Calculation_for_Indicator_Functions}]
lemma integral_for_indicator_1D {a b : \R} (hab : a < b) {u : \R \to \R} {Dist : \R \to \R}
    {x0 : \R} (hx0_mem : x0 \in Ioo a b)
    (h_u_def : \forall x \in Icc a b, u x = if x > x0 then 1 else 0) :
    riemannStieltjesIntegral u Dist a b = 1 - Dist x0 :=
  sorry -- Formal proof omitted, sketch in Appendix \ref{proof:lemma:integral_indicator_1d}
\end{lstlisting}

\textbf{Lean 4 Theorem \ref{thm:fsd_iff_1D} (FSD Equivalence (1D)):}
\begin{lstlisting}[caption={FSD Equivalence (1D)}, label={lst:1d_fsd_iff}]
theorem fsd_iff_integral_indicator_ge_1D {F G : \R \to \R} {a b : \R} (hab : a < b)
    (hFa : F a = 0) (hGa : G a = 0) (hFb : F b = 1) (hGb : G b = 1) :
    (\forall x \in Icc a b, F x \le G x) \iff
    (\forall x0 \in Ioo a b,
      let u := fun x => if x > x0 then (1 : \R) else 0
      -- The definition of u here matches the condition P in riemannStieltjesIntegral
      riemannStieltjesIntegral u F a b \ge riemannStieltjesIntegral u G a b) :=
  sorry -- Formal proof omitted, sketch in Appendix \ref{proof:thm:fsd_iff_1d}
\end{lstlisting}

\textbf{Lean 4 Definition \ref{def:Nd_Vector} (N-dimensional Vector of Reals):}
\begin{lstlisting}[caption={$N$-dimensional Vector of Reals ($\RVector$)}, label={lst:Vector_nd}]
def RVector (n : \N) := Fin n \to \R
\end{lstlisting}

\textbf{Lean 4 Definition \ref{def:vector relations} (Vector Relations):}
\begin{lstlisting}[caption={$N$-dimensional Vector Relations}, label={lst:Vector_relations_nd}]
-- Assuming 'n : \N' is a parameter for RVector n
def VecLT {n : \N} (x y : RVector n) : Prop := \forall i : Fin n, x i < y i
def VecLE {n : \N} (x y : RVector n) : Prop := \forall i : Fin n, x i \le y i
def allGt {n : \N} (x y : RVector n) : Prop := \forall i : Fin n, x i > y i

-- Instances for notation like x < y or x \le y can be defined
instance {n : \N} : LT (RVector n) := \<lt\>
instance {n : \N} : LE (RVector n) := \<le\>
\end{lstlisting}

\textbf{Lean 4 Definition \ref{def:N-dimensional Rectangles} ($N$-dimensional Rectangles):}
\begin{lstlisting}[caption={$N$-dimensional Rectangles ($\Iccn$, $\Ioon$)}, label={lst:Rectangles_nd}]
def closedRectangleND {n : \N} (a b : RVector n) : Set (RVector n) :=
  {x | \forall i, a i \le x i \wedge x i \le b i}

def Icc_n {n : \N} (a b : RVector n) : Set (RVector n) :=
  {x | RVector.le a x \wedge RVector.le x b}

def Ioo_n {n : \N} (a b : RVector n) : Set (RVector n) :=
  {x | RVector.lt a x \wedge RVector.lt x b}
\end{lstlisting}

\textbf{Lean 4 Definition \ref{def:Special Vector Constructions} (Special Vector Constructions):}
\begin{lstlisting}[caption={Special Vector Constructions ($\mixedVec$, $\text{replace}$, $\text{midpoint}$)}, label={lst:Special_Vector_Constructions_nd}]
def mixedVector {n : \N} (x0 b : RVector n) (s : Finset (Fin n)) : RVector n :=
  fun i => if i \in s then x0 i else b i

def replace_comp {n : \N} (x : RVector n) (j : Fin n) (val : \R) : RVector n :=
  fun i => if i = j then val else x i

noncomputable def midpoint {n : \N} (x y : RVector n) : RVector n :=
  fun i => (x i + y i) / 2
\end{lstlisting}

\textbf{Lean 4 Definition \ref{def:indicatorURO_ND} (Indicator Function for Upper-Right Orthant):}
\begin{lstlisting}[caption={Indicator Function for Upper-Right Orthant ($\indicatorURO$)}, label={lst:indicatorURO_nd}]
noncomputable def indicatorUpperRightOrthant {n : \N} (x0 : RVector n) (x : RVector n) : \R :=
  haveI : Decidable (allGt x x0) := Classical.propDecidable _
  if allGt x x0 then 1 else 0
\end{lstlisting}

\textbf{Lean 4 Definition \ref{def:survivalProbN_ND} ($N$-dimensional Survival Probability):}
\begin{lstlisting}[caption={$N$-dimensional Survival Probability ($\survivalProbN$)}, label={lst:survivalProbN_nd}]
noncomputable def survivalProbN {n : \N} (Dist : RVector n \to \R) (x0 b : RVector n) : \R :=
  1 - Finset.sum ((Finset.powerset (Finset.univ : Finset (Fin n))) \ {\empty})
      (fun s => (-1)^(s.card + 1) * Dist (mixedVector x0 b s))
\end{lstlisting}

\textbf{Lean 4 Definition \ref{def:riemannStieltjesND_ND} ($N$-dim Riemann-Stieltjes Integral for Orthant Indicators):}
\begin{lstlisting}[caption={$N$-dim Riemann-Stieltjes Integral for Orthant Indicators ($\riemannStieltjesND$)}, label={lst:riemannStieltjesND_nd}]
noncomputable def riemannStieltjesIntegralND {n : \N} (u : RVector n \to \R)
    (Dist : RVector n \to \R) (a b : RVector n) (x\0 : RVector n)
    (h_x\0 : x\0 \in Icc_n a b)
    (h_u : \forall x \in Ioo_n a b, u x = indicatorUpperRightOrthant x\0 x) : \R :=
  haveI : Decidable (\exists x\0' \in Icc_n a b, \forall x \in Ioo_n a b, u x =
indicatorUpperRightOrthant x\0' x) :=
    Classical.propDecidable _
  if h : \exists x\0' \in Icc_n a b, \forall x \in Ioo_n a b, u x = indicatorUpperRightOrthant x\0' x then
    -- Use x\0 directly, as it satisfies the condition by h_u
    survivalProbN Dist x\0 b
  else
    0\end{lstlisting}

\textbf{Lean 4 Lemma \ref{lem:Ioo_subset_Icc_self_n} (Subset Relation between Rectangles):}
\begin{lstlisting}[caption={Subset Relation between Rectangles ($\Ioon \subseteq \Iccn$)}, label={lst:lem_Ioo_subset_Icc_self_n}]
lemma Ioo_n_subset_Icc_n {n : \N} {a b : RVector n} : Ioo_n a b \sub Icc_n a b :=
  sorry -- Formal proof omitted, sketch in Appendix \ref{proof:lem:Ioo_subset_Icc_self_n}
\end{lstlisting}

\textbf{Lean 4 Lemma \ref{lem:uniqueness_ND} (Uniqueness of Indicator Function Parameter ($N$D)):}
\begin{lstlisting}[caption={Uniqueness of Indicator Function Parameter ($N$D)}, label={lst:lem_uniqueness_nd}]
lemma uniqueness_of_indicatorUpperRightOrthant_x\0_on_open {n : \N} {a b x\1 x\2 : RVector
n}
    (hab : \forall i, a i < b i)
    (hx\1_mem : x\1 \in Ioo_n a b) (hx\2_mem : x\2 \in Ioo_n a b)
    (h_eq_fn : \forall x \in Ioo_n a b, indicatorUpperRightOrthant x\1 x =
indicatorUpperRightOrthant x\2 x) :
    x\1 = x\2 :=
  sorry -- Formal proof omitted, sketch in Appendix \ref{proof:lem:uniqueness_nd}
\end{lstlisting}

\textbf{Lean 4 Lemma \ref{lem:integral_for_indicator_ND} (Integral for Indicator Function ($N$D)):}
\begin{lstlisting}[caption={Integral for Indicator Function ($N$D)}, label={lst:lem_integral_for_indicator_nd}]
lemma integral_for_indicatorUpperRightOrthant {n : \N} {a b : RVector n}
    (hab : \forall i, a i < b i)
    {u : RVector n \to \R} {Dist : RVector n \to \R} {x\0 : RVector n}
    (hx\0_mem : x\0 \in Ioo_n a b)
    (h_u_def : \forall x \in Icc_n a b, u x = indicatorUpperRightOrthant x\0 x) :
    riemannStieltjesIntegralND u Dist a b x\0 (Ioo_subset_Icc_self_n hx\0_mem)
      (fun x hx => h_u_def x (Ioo_subset_Icc_self_n hx)) = survivalProbN Dist x\0 b :=
  sorry -- Formal proof omitted, sketch in Appendix \ref{proof:lem:integral_for_indicator_nd}
\end{lstlisting}

\textbf{Lean 4 Theorem \ref{thm:fsd_equivalence_ND} (FSD Equivalence for Indicator Functions ($N$D)):}
\begin{lstlisting}[caption={FSD Equivalence for Indicator Functions ($N$D)}, label={lst:thm_fsd_equivalence_nd}]
theorem fsd_nd_iff_integral_indicatorUpperRightOrthant_ge {n : \N} {F G : RVector n \to \R}
    {a b : RVector n} (hab : \forall i, a i < b i) :
    (\forall x\0 \in Ioo_n a b, survivalProbN F x\0 b \ge survivalProbN G x\0 b) \lr
    (\forall (x\0 : RVector n) (hx\0 : x\0 \in Ioo_n a b),
      riemannStieltjesIntegralND (indicatorUpperRightOrthant x\0) F a b x\0
(Ioo_subset_Icc_self_n hx\0) (fun x _ => rfl) \ge
      riemannStieltjesIntegralND (indicatorUpperRightOrthant x\0) G a b x\0
(Ioo_subset_Icc_self_n hx\0) (fun x _ => rfl)) := by
sorry -- Formal proof omitted, sketch in Appendix \ref{proof:thm:fsd_equivalence_nd}
\end{lstlisting}

\subsection{Decidability and Classical Reasoning in Lean 4}\label{classical reasoning}
In definitions such as \lean{riemannStieltjesIntegral} (Definition \ref{def:Riemann-Stieltjes Integral_1D}) and \lean{indicatorUpperRightOrthant} (Definition \ref{def:indicatorURO_ND}), we employ classical reasoning techniques in Lean 4. This section explains the purpose and implications of these techniques.

\textbf{Decidability in Lean's Logic:}
Lean's underlying logic is constructive (intuitionistic). In constructive logic, to assert a proposition $P$, one must provide evidence (a proof) for $P$. Similarly, to use a proposition $P$ in a conditional statement like "if $P$ then $a$ else $b$," one needs to first establish that $P$ is decidable—meaning there exists an algorithm to determine whether $P$ is true or false.

\textbf{The Role of \lean{Classical.propDecidable}:}
The proposition $P$ in Definition \ref{def:Riemann-Stieltjes Integral_1D}:
\begin{lstlisting}
P := \exists x\0 \in Ioo a b, \forall x \in Icc a b, u x = if x > x\0 then 1 else 0
\end{lstlisting}
asserts the existence of a point $x_0$ that characterizes the function $u$ as a specific type of indicator function. For an arbitrary function $u: \R \to \R$, determining the truth of this existential statement algorithmically would require examining an uncountable set of potential $x_0$ values, which is not computationally feasible.

The declaration \lean{haveI : Decidable P := Classical.propDecidable P} invokes an axiom from classical logic: the law of excluded middle ($P \lor \neg P$). By assuming this axiom, \lean{Classical.propDecidable} provides a witness that $P$ is decidable, allowing us to use $P$ in a conditional expression without providing an explicit algorithm to decide $P$. This is a standard technique in Lean for handling propositions that are not constructively decidable.

\textbf{Noncomputability and \lean{Classical.choose}:}
When \lean{Classical.propDecidable} is used for a proposition like $P \equiv \exists y, Q(y)$, and if $P$ is true, we might need to obtain the witness $y$. The construct \lean{Classical.choose hP} (where \lean{hP} is a proof that $P$ holds) returns such a witness. Again, this relies on classical axioms, as there might not be a constructive way to extract the witness algorithmically. Functions that use \lean{Classical.choose} are labeled \lean{noncomputable} in Lean, indicating that they cannot be directly executed as algorithms.

\textbf{Theoretical Justification and Practical Implications:}
The use of classical reasoning is standard in most economic theory, including stochastic dominance. Our aim is to formalize these established mathematical concepts. Lean's framework allows us to do so rigorously by explicitly marking the points where non-constructive elements are introduced.
For our purposes, this is acceptable because:
\begin{enumerate}
    \item We are formalizing mathematical theory, where classical existence proofs are standard.
    \item The core FSD theorems relate properties of distribution functions; they are not primarily about computing specific integral values for arbitrary functions but about establishing equivalences between different characterizations of stochastic dominance.
    \item Even if some definitions are noncomputable, the resulting theorems can still be applied. For instance, if we can independently (and constructively) prove that a function $u$ *is* of the required indicator form, we can compute the expected value directly without relying on \lean{Classical.choose}.
\end{enumerate}
This approach allows us to leverage the power of classical mathematics within a formal system, ensuring logical rigor.

\textbf{References for Classical Logic in Lean}

For readers interested in deeper exploration of these concepts, the following references are recommended:

1. Avigad, J., de Moura, L., \& Kong, S. (2023). "Theorem Proving in Lean 4." \url{https://leanprover.github.io/theorem_proving_in\_lean4/} - See especially Chapter 6 on Propositions and Proofs, and Chapter 10 on Classical Logic.

2. de Moura, L., Ebner, G., Roesch, J.,\& Ullrich, S. (2021). "The Lean 4 Theorem Prover and Programming Language." In Automated Deduction – CADE 28 (pp. 625-635). Springer, Cham.

3. Carneiro, M. (2019). "Formalizing computability theory via partial recursive functions." In International Conference on Interactive Theorem Proving (pp. 12:1-12:17). Schloss Dagstuhl-Leibniz-Zentrum für Informatik.

4. Gonthier, G., Ziliani, B., Nanevski, A., \& Dreyer, D. (2013). "How to make ad hoc proof automation less ad hoc." Journal of Functional Programming, 23(4), 357-401. (For a broader perspective on proof automation in dependent type theory).

\subsection{Detailed Proof Sketches}
The following sketches outline the main arguments used in the formal Lean proofs for key lemmas and theorems. They use Lean-style tactic annotations (e.g., \tactic{intro}, \tactic{apply}, \tactic{linarith}) to highlight the reasoning steps.

\subsubsection{Proof Sketch of Lemma \ref{lemma:uniqueness_1D} (Uniqueness of Indicator Point (1D))}\label{proof:lemma:uniqueness_1d}
\textbf{Goal:} Given $a < b$, $x_1, x_2 \in \Ioo{a}{b}$, and $\forall x \in \Icc{a}{b}, (\text{if } x > x_1 \text{ then } 1 \text{ else } 0) = (\text{if } x > x_2 \text{ then } 1 \text{ else } 0)$, prove that $x_1 = x_2$.

\textbf{Proof by Contradiction:}
\begin{enumerate}
    \item We assume $x_1 \neq x_2$ and aim to derive a contradiction.

    \item \tactic{by\_contra h\_neq}: Introduce the hypothesis that $x_1 \neq x_2$.

    \item \tactic{have h\_lt\_or\_gt : $x_1 < x_2 \lor x_2 < x_1$ := Ne.lt\_or\_lt h\_neq}: Since $x_1 \neq x_2$, either $x_1 < x_2$ or $x_2 < x_1$.

    \item \tactic{rcases h\_lt\_or\_gt with h\_lt | h\_gt}: Split into two cases based on the ordering.

    \item \textbf{Case 1: $x_1 < x_2$ (with hypothesis \tactic{h\_lt})}
        \begin{enumerate}
            \item \tactic{let z := $(x_1 + x_2)/2$}: Define the midpoint between $x_1$ and $x_2$.

            \item \tactic{have hz\_mem\_Ioo : $z \in \Ioo a b$}: Prove that $z \in (a,b)$:
                \begin{enumerate}
                    \item \tactic{constructor}: Split into proving $a < z$ and $z < b$.

                    \item For $a < z$, we calculate:
                    \begin{align*}
                        a &< x_1 \quad \text{(by $\tactic{hx}_1\tactic{\_mem.1}$)} \\
                        &= \frac{x_1 + x_1}{2} \quad \text{(by \tactic{ring})} \\
                        &< \frac{x_1 + x_2}{2} \quad \text{(by \tactic{linarith} using \tactic{h\_lt})} \\
                        &= z \quad \text{(by definition)}
                    \end{align*}

                    \item For $z < b$, we calculate:
                    \begin{align*}
                        z &= \frac{x_1 + x_2}{2} \quad \text{(by definition)} \\
                        &< \frac{x_2 + x_2}{2} \quad \text{(by \tactic{linarith} using \tactic{h\_lt})} \\
                        &= x_2 \quad \text{(by \tactic{ring})} \\
                        &< b \quad \text{(by $\tactic{hx}_2\tactic{\_mem.2}$)}
                    \end{align*}
                \end{enumerate}

            \item \tactic{have hz\_mem\_Icc : $z \in \Icc a b$ := Ioo\_subset\_Icc\_self hz\_mem\_Ioo}: Since $z \in (a,b)$, we also have $z \in [a,b]$.

            \item \tactic{specialize h\_eq\_fn z hz\_mem\_Icc}: Apply our equality hypothesis to the point $z$.

            \item \tactic{have h\_z\_gt\_x$_1$ : $z > x_1$ := by unfold z; linarith [h\_lt]}: Prove $z > x_1$ using the definition of $z$ and $x_1 < x_2$.

            \item \tactic{have h\_z\_lt\_x$_2$ : $z < x_2$ := by unfold z; linarith [h\_lt]}: Prove $z < x_2$ similarly.

            \item \tactic{have h\_z\_not\_gt\_x$_2$ : $\lnot(z > x_2)$ := not\_lt.mpr (le\_of\_lt h\_z\_lt\_x$_2$)}: Since $z < x_2$, we have $\lnot(z > x_2)$.

            \item \tactic{simp only [if\_pos h\_z\_gt\_x$_1$, if\_neg h\_z\_not\_gt\_x$_2$] at h\_eq\_fn}: Simplify the equality using the facts above.
                \begin{itemize}
                    \item Since $z > x_1$, the left side evaluates to $1$.
                    \item Since $\lnot(z > x_2)$, the right side evaluates to $0$.
                    \item The equation becomes $1 = 0$.
                \end{itemize}

            \item \tactic{linarith [h\_eq\_fn]}: This gives a contradiction since $1 \neq 0$.
        \end{enumerate}

    \item \textbf{Case 2: $x_2 < x_1$ (with hypothesis \tactic{h\_gt})}
        \begin{enumerate}
            \item \tactic{let z := $(x_1 + x_2)/2$}: Define the midpoint between $x_1$ and $x_2$.

            \item \tactic{have hz\_mem\_Ioo : $z \in \Ioo a b$}: Prove that $z \in (a,b)$:
                \begin{enumerate}
                    \item For $a < z$, we calculate:
                    \begin{align*}
                        a &< x_2 \quad \text{(by $\tactic{hx}_2\tactic{\_mem.1}$)} \\
                        &= \frac{x_2 + x_2}{2} \quad \text{(by \tactic{ring})} \\
                        &< \frac{x_1 + x_2}{2} \quad \text{(by \tactic{linarith} using \tactic{h\_gt})} \\
                        &= z \quad \text{(by definition)}
                    \end{align*}

                    \item For $z < b$, we calculate:
                    \begin{align*}
                        z &= \frac{x_1 + x_2}{2} \quad \text{(by definition)} \\
                        &< \frac{x_1 + x_1}{2} \quad \text{(by \tactic{linarith} using \tactic{h\_gt})} \\
                        &= x_1 \quad \text{(by \tactic{ring})} \\
                        &< b \quad \text{(by $\tactic{hx}_1\tactic{\_mem.2}$)}
                    \end{align*}
                \end{enumerate}

            \item \tactic{have hz\_mem\_Icc : $z \in \Icc a b$ := Ioo\_subset\_Icc\_self hz\_mem\_Ioo}: Since $z \in (a,b)$, we also have $z \in [a,b]$.

            \item \tactic{specialize h\_eq\_fn z hz\_mem\_Icc}: Apply our equality hypothesis to the point $z$.

            \item \tactic{have h\_z\_lt\_x$_1$ : $z < x_1$ := by unfold z; linarith [h\_gt]}: Prove $z < x_1$ using the definition of $z$ and $x_2 < x_1$.

            \item \tactic{have h\_z\_gt\_x$_2$ : $z > x_2$ := by unfold z; linarith [h\_gt]}: Prove $z > x_2$ similarly.

            \item \tactic{have h\_z\_not\_gt\_x$_1$ : $\lnot(z > x_1)$ := not\_lt.mpr (le\_of\_lt h\_z\_lt\_x$_1$)}: Since $z < x_1$, we have $\lnot(z > x_1)$.

            \item \tactic{simp only [if\_neg h\_z\_not\_gt\_x$_1$, if\_pos h\_z\_gt\_x$_2$] at h\_eq\_fn}: Simplify the equality using the facts above.
                \begin{itemize}
                    \item Since $\lnot(z > x_1)$, the left side evaluates to $0$.
                    \item Since $z > x_2$, the right side evaluates to $1$.
                    \item The equation becomes $0 = 1$.
                \end{itemize}

            \item \tactic{linarith [h\_eq\_fn]}: This gives a contradiction since $0 \neq 1$.
        \end{enumerate}
\end{enumerate}
Since both cases lead to a contradiction, the initial assumption $x_1 \neq x_2$ must be false. Thus, $x_1 = x_2$. \qed

\subsubsection{Proof Sketch of Lemma \ref{lemma:integral_indicator_1D} (Integral Calculation for Indicator Functions (1D))}\label{proof:lemma:integral_indicator_1d}
\textbf{Goal:} Given $a < b$, $u(x) = \text{if } x > x_0 \text{ then } 1 \text{ else } 0$ for $x \in \Icc{a}{b}$ (with $x_0 \in \Ioo{a}{b}$), prove \\ $\lean{riemannStieltjesIntegral}(u, \text{Dist}, a, b) = 1 - \text{Dist}(x_0)$.

\begin{enumerate}
    \item \tactic{have h\_u\_is\_P : $\exists x_0' \in \Ioo a b$ $\forall$ $x \in \Icc a b$ $u(x) = \text{if}~x > x_0'~\text{then}~1~\text{else}~0$ := by use x$_0$}: We show that $u$ satisfies the property $P$ from our integral definition by using $x_0$ as the witness.

    \item \tactic{dsimp [riemannStieltjesIntegral]}: Unfold the definition of the Riemann-Stieltjes integral.

    \item \tactic{rw [dif\_pos h\_u\_is\_P]}: Since we proved that $u$ satisfies property $P$, we evaluate the \texttt{if} statement to its \tactic{then} branch, giving us $1 - \text{Dist}(\text{Classical.choose}~h\_u\_is\_P)$.

    \item Now we need to show that the chosen $x_0'$ equals our given $x_0$:
        \begin{enumerate}
            \item \tactic{have h\_x$_0$\_eq : Classical.choose h\_u\_is\_P = x$_0$ := by}:
                \begin{enumerate}
                    \item \tactic{let x$_0'$ := Classical.choose h\_u\_is\_P}: Name the chosen value.

                    \item \tactic{have h\_spec' : $x_0' \in \Ioo a b \land \forall x \in \Icc a b$ $u(x) = \text{if}~x > x_0'~\text{then}~1~\text{else}~0$ := \\
                        Classical.choose\_spec h\_u\_is\_P}: Extract the properties of the chosen value.

                    \item \tactic{have h\_eq\_fn : $\forall x \in \Icc\,a\,b, (\text{if}~x > x_0'~\text{then}~1~\text{else}~0) = (\text{if}~x > x_0~\text{then}~1~\text{else}~0)$}: We prove that the indicator functions defined by $x_0'$ and $x_0$ are equal on $[a,b]$.

                        \item For any $x \in [a,b]$, we show:
                        \begin{align*}
                            (\text{if}~x > x_0'~\text{then}~1~\text{else}~0) &= u(x) \quad \text{(by \tactic{$h\_\text{spec}'.2$})} \\
                            &= (\text{if}~x > x_0~\text{then}~1~\text{else}~0) \quad \text{(by \tactic{$h\_u\_\text{def}$})}
                        \end{align*}

                    \item \tactic{exact uniqueness\_of\_indicator\_x$_0$ hab h\_spec'.1 hx$_0$\_mem h\_eq\_fn}: Apply Lemma \ref{lemma:uniqueness} to conclude that $x_0' = x_0$.
                \end{enumerate}
        \end{enumerate}

    \item \tactic{rw [h\_x$_0$\_eq]}: Substitute the equality $\text{Classical.choose}~h\_u\_is\_P = x_0$ into our goal, which transforms it to $1 - \text{Dist}(x_0) = 1 - \text{Dist}(x_0)$, which is true by reflexivity.
\end{enumerate}\qed
\subsubsection{Proof Sketch of Theorem \ref{thm:fsd_iff_1D} (FSD Equivalence (1D))}\label{proof:thm:fsd_iff_1d}
\textbf{Goal:} Given $F(a)=G(a)=0, F(b)=G(b)=1$, prove $(\forall x \in \Icc{a}{b}, F(x) \leq G(x)) \iff (\forall x_0 \in \Ioo{a}{b}, \text{let } u_0(x) = \indicator{(x_0,\infty)}(x), \lean{riemannStieltjesIntegral}(u_0, F, a, b) \geq \lean{riemannStieltjesIntegral}(u_0, G, a, b))$.

We prove both directions of the equivalence separately.

\begin{enumerate}
    \item \tactic{constructor}: Split the bidirectional implication into two parts.

    \item \textbf{Forward Direction ($\Rightarrow$):} Assume $F(x) \leq G(x)$ for all $x \in [a,b]$, prove the integral inequality.
        \begin{enumerate}
            \item \tactic{intro h\_dominance}: Introduce the hypothesis that $F(x) \leq G(x)$ for all $x \in [a,b]$.

            \item \tactic{intro x$_0$ hx$_0$\_mem}: Consider an arbitrary $x_0 \in (a,b)$.

            \item \tactic{let u := fun x => if x > x$_0$ then (1 : $\mathbb{R}$) else 0}: Define the indicator function $u$ for this $x_0$.

            \item \tactic{have h\_u\_def : $\forall x \in \Icc\,a\,b, u(x) = \text{if}~x > x_0~\text{then}~1~\text{else}~0$ := by intro x \_hx; rfl}: Confirm the definition of $u$.

            \item \tactic{have calc\_int\_F : riemannStieltjesIntegral u F a b = 1 - F x$_0$ := by \\
            apply integral\_for\_indicator hab hx$_0$\_mem h\_u\_def}: Calculate the integral with respect to $F$ using Lemma \ref{lemma:integral_indicator_1D}.

            \item \tactic{have calc\_int\_G : riemannStieltjesIntegral u G a b = 1 - G x$_0$ := by \\
            apply integral\_for\_indicator hab hx$_0$\_mem h\_u\_def}: Similarly for $G$.

            \item \tactic{dsimp only}: Unfold the \tactic{let} binding in the goal.

            \item \tactic{rw [calc\_int\_F, calc\_int\_G]}: Substitute the calculated integral values.

            \item \tactic{rw [ge\_iff\_le, sub\_le\_sub\_iff\_left]}: Simplify the inequality $1 - F(x_0) \geq 1 - G(x_0)$ to $F(x_0) \leq G(x_0)$.

            \item \tactic{apply h\_dominance}: Apply our main hypothesis. We need to show $x_0 \in [a,b]$.

            \item \tactic{exact Ioo\_subset\_Icc\_self hx$_0$\_mem}: Since $x_0 \in (a,b)$, we have $x_0 \in [a,b]$.
        \end{enumerate}

    \item \textbf{Backward Direction ($\Leftarrow$):} Assume the integral inequality for all indicators, prove $F(x) \leq G(x)$ for all $x \in [a,b]$.
        \begin{enumerate}
            \item \tactic{intro h\_integral\_indicator}: Introduce the hypothesis that for all $x_0 \in (a,b)$, the integral inequality holds.

            \item \tactic{intro x$_0$ hx$_0$\_mem\_Icc}: Consider an arbitrary $x_0 \in [a,b]$.

            \item \tactic{by\_cases h\_eq\_a : x$_0$ = a}: Handle the case where $x_0 = a$.
                \begin{itemize}
                    \item \tactic{rw [h\_eq\_a, hFa, hGa]}: If $x_0 = a$, then $F(x_0) = F(a) = 0$ and $G(x_0) = G(a) = 0$, so $F(x_0) \leq G(x_0)$ becomes $0 \leq 0$, which is true.
                \end{itemize}

            \item \tactic{by\_cases h\_eq\_b : x$_0$ = b}: Handle the case where $x_0 = b$.
                \begin{itemize}
                    \item \tactic{rw [h\_eq\_b, hFb, hGb]}: If $x_0 = b$, then $F(x_0) = F(b) = 1$ and $G(x_0) = G(b) = 1$, so $F(x_0) \leq G(x_0)$ becomes $1 \leq 1$, which is true.
                \end{itemize}

            \item Now we handle the case where $x_0 \in (a,b)$:
                \begin{enumerate}
                    \item \tactic{push\_neg at h\_eq\_a h\_eq\_b}: Transform $\lnot(x_0 = a)$ to $x_0 \neq a$ and $\lnot(x_0 = b)$ to $x_0 \neq b$.

                    \item \tactic{have hx$_0$\_mem\_Ioo : $x_0 \in \Ioo\,a\,b$ := $\langle$lt\_of\_le\_of\_ne hx$_0$\_mem\_Icc.1 (Ne.symm h\_eq\_a), lt\_of\_le\_of\_ne hx$_0$\_mem\_Icc.2 h\_eq\_b$\rangle$}: Since $a \leq x_0 \leq b$ and $x_0 \neq a$ and $x_0 \neq b$, we have $a < x_0 < b$, i.e., $x_0 \in (a,b)$.

                    \item \tactic{specialize h\_integral\_indicator x$_0$ hx$_0$\_mem\_Ioo}: Apply our hypothesis to $x_0 \in (a,b)$.

                    \item \tactic{let u := fun x => if x > x$_0$ then (1 : $\mathbb{R}$) else 0}: Define the indicator function for $x_0$.

                    \item \tactic{have h\_u\_def : $\forall x \in \Icc\,a\,b, u(x) = \text{if}~x > x_0~\text{then}~1~\text{else}~0$ := by intro x \_hx; rfl}: Confirm the definition.

                    \item \tactic{have calc\_int\_F : riemannStieltjesIntegral u F a b = 1 - F x$_0$ := by \\
                    apply integral\_for\_indicator hab hx$_0$\_mem\_Ioo h\_u\_def}: Calculate the integral for $F$.

                    \item \tactic{have calc\_int\_G : riemannStieltjesIntegral u G a b = 1 - G x$_0$ := by \\
                    apply integral\_for\_indicator hab hx$_0$\_mem\_Ioo h\_u\_def}: Calculate the integral for $G$.

                    \item \tactic{dsimp only at h\_integral\_indicator}: Unfold the \texttt{let} binding in the hypothesis.

                    \item \tactic{rw [calc\_int\_F, calc\_int\_G] at h\_integral\_indicator}: Substitute the calculated integral values.

                    \item \tactic{rw [ge\_iff\_le, sub\_le\_sub\_iff\_left] at h\_integral\_indicator}: Simplify the inequality in the hypothesis to $F(x_0) \leq G(x_0)$.

                    \item \tactic{exact h\_integral\_indicator}: The hypothesis now exactly matches our goal.
                \end{enumerate}
        \end{enumerate}
\end{enumerate}

This completes the proof of the equivalence. We've shown that $F(x) \leq G(x)$ for all $x \in [a,b]$ if and only if the integral inequality holds for all indicator functions. \qed

\subsubsection{Proof Sketch of Lemma \ref{lem:Ioo_subset_Icc_self_n} ($\Ioon(a,b) \subseteq \Iccn(a,b)$)}\label{proof:lem:Ioo_subset_Icc_self_n}
\textbf{Goal:} For $a, b \in \RVector(n)$, prove $\Ioon(a,b) \subseteq \Iccn(a,b)$.

\begin{enumerate}
    \item \tactic{intro x hx}: Introduces an arbitrary element $x$ and a hypothesis \Lean{hx} stating $x \in \Ioon(a,b)$. By definition of $\Ioon(a,b)$, \tactic{hx} means $a < x \land x < b$. This can be accessed as \tactic{hx.1} ($a < x$) and \tactic{hx.2} ($x < b$).
    \item \tactic{constructor}: The goal is to prove $x \in \Iccn(a,b)$, which by definition is $a \leq x \land x \leq b$. The \tactic{constructor} tactic splits this conjunction into two subgoals:
    \begin{enumerate}
        \item $a \leq x$
        \item $x \leq b$
    \end{enumerate}
    \item For the first subgoal ($a \leq x$, which means $\forall i, a(i) \leq x(i)$):
    \begin{enumerate}
        \item \tactic{intro i}: Introduces an arbitrary index $i \in \Fin(n)$. The goal becomes $a(i) \leq x(i)$.
        \item \tactic{exact le\_of\_lt (hx.1 i)}: From \tactic{hx.1}, we have $a < x$, which means $\forall k, a(k) < x(k)$. So, specifically for index $i$, we have $a(i) < x(i)$. The lemma \tactic{le\_of\_lt} states that if $u < v$, then $u \leq v$. Thus, $a(i) < x(i)$ implies $a(i) \leq x(i)$. This completes the first subgoal.
    \end{enumerate}
    \item For the second subgoal ($x \leq b$, which means $\forall i, x(i) \leq b(i)$):
    \begin{enumerate}
        \item \tactic{intro i}: Introduces an arbitrary index $i \in \Fin(n)$. The goal becomes $x(i) \leq b(i)$.
        \item \tactic{exact le\_of\_lt (hx.2 i)}: From \tactic{hx.2}, we have $x < b$, which means $\forall k, x(k) < b(k)$. So, for index $i$, $x(i) < b(i)$. Using \tactic{le\_of\_lt}, this implies $x(i) \leq b(i)$. This completes the second subgoal.
    \end{enumerate}
\end{enumerate}
All goals are proven, so the lemma holds. \qed

\subsubsection{Proof Sketch of Lemma \ref{lem:uniqueness_ND} (Uniqueness of Indicator Function Parameter ($N$D))}\label{proof:lem:uniqueness_nd}
\textbf{Goal:} Given $a, b, x_1, x_2 \in \RVector(n)$ with $\forall i, a(i) < b(i)$ ($h_{ab\_open}$), $x_1, x_2 \in \Ioon(a,b)$, and $\forall x \in \Ioon(a,b), \indicatorURO(x_1, x) = \indicatorURO(x_2, x)$, prove $x_1 = x_2$.

The proof is by contradiction.

\begin{enumerate}
    \item \tactic{by\_contra h\_neq}: Assume $x_1 \neq x_2$ for contradiction. \tactic{h\_neq} is this hypothesis.
    \item \tactic{have h\_exists\_diff : $\exists$ j, x1 j $\neq$ x2 j := by}: This block proves that if $x_1 \neq x_2$, then there must be an index $j$ where their components differ.
    \begin{enumerate}
        \item \tactic{by\_contra h\_all\_eq}: Inner proof by contradiction. Assume $\neg (\exists j, x_1(j) \neq x_2(j))$, which is equivalent to $\forall j, x_1(j) = x_2(j)$. This is \tactic{h\_all\_eq}.
        \item \tactic{push\_neg at h\_all\_eq}: Transforms \tactic{h\_all\_eq} from $\neg (\exists j, x_1(j) \neq x_2(j))$ to $\forall j, \neg (x_1(j) \neq x_2(j))$, which simplifies to $\forall j, x_1(j) = x_2(j)$.
        \item \tactic{have h\_eq : x1 = x2 := by funext i; exact h\_all\_eq i}: If all components are equal ($\forall i, x_1(i) = x_2(i)$), then by function extensionality (\tactic{funext i}), the vectors $x_1$ and $x_2$ are equal. This is \tactic{h\_eq}.
        \item \tactic{exact h\_neq h\_eq}: This leads to a contradiction. We have \Lean{h\_neq}: $x_1 \neq x_2$ and \tactic{h\_eq}: $x_1 = x_2$.
    \end{enumerate}
    \item \tactic{rcases h\_exists\_diff with ⟨j, h\_diff\_at\_j⟩}: Destructures the existential hypothesis \tactic{h\_exists\_diff}. This gives an index $j$ and a hypothesis \tactic{h\_diff\_at\_j}: $x_1(j) \neq x_2(j)$.
    \item \tactic{have h\_lt\_or\_gt : x1 j < x2 j $\vee$ x2 j < x1 j := by exact Ne.lt\_or\_lt h\_diff\_at\_j}: Since $x_1(j) \neq x_2(j)$, by trichotomy for real numbers, either $x_1(j) < x_2(j)$ or $x_2(j) < x_1(j)$. This is captured by \tactic{Ne.lt\_or\_lt}.
    \item \tactic{rcases h\_lt\_or\_gt with h\_lt | h\_gt}: Splits the proof into two cases based on the disjunction \tactic{h\_lt\_or\_gt}.
    \begin{enumerate}
        \item \textbf{Case 1: \tactic{h\_lt : x1 j < x2 j}}
        \begin{enumerate}
            \item \tactic{let z1 := (x1 j + x2 j) / 2}: Define $z_1$ as the midpoint of $x_1(j)$ and $x_2(j)$.
            \item \tactic{have hz1\_gt\_x1j : z1 > x1 j := by ...}: Prove $z_1 > x_1(j)$.
            \begin{enumerate}
                \item \tactic{dsimp [z1]}: Unfold definition of $z_1$.
                \item \tactic{have h\_two\_pos : (0 : R) < 2 := by norm\_num}: Establish $2 > 0$. \Lean{norm\_num} simplifies numerical expressions.
                \item \tactic{rw [gt\_iff\_lt, lt\_div\_iff0 h\_two\_pos]}: Rewrite $z_1 > x_1(j)$ to $x_1(j) < z_1$, then to $x_1(j) \cdot 2 < x_1(j) + x_2(j)$ using $2 > 0$.
                \item \tactic{rw [mul\_comm, two\_mul]}: Rewrite $x_1(j) \cdot 2$ to $x_1(j) + x_1(j)$.
                \item \tactic{rw [Real.add\_lt\_add\_iff\_left (x1 j)]}: Cancel $x_1(j)$ from both sides of $x_1(j) + x_1(j) < x_1(j) + x_2(j)$. Goal becomes $x_1(j) < x_2(j)$.
                \item \tactic{exact h\_lt}: This is exactly the hypothesis for this case.
            \end{enumerate}
            \item \tactic{have hz1\_lt\_x2j : z1 < x2 j := by ...}: Similarly, prove $z_1 < x_2(j)$.
            \item \tactic{let z : RVector n := fun i => if i = j then z1 else (max (x1 i) (x2 i) + b i) / 2}: Construct the test vector $z$. For component $j$, $z(j)=z_1$. For $i \neq j$, $z(i)$ is the midpoint of $\max(x_1(i), x_2(i))$ and $b(i)$. This ensures $z(i)$ is greater than $x_1(i)$ and $x_2(i)$, and less than $b(i)$.
            \item \tactic{have hz\_mem\_Ioo : z $\in$ Ioo\_n a b := by ...}: Prove $z \in \Ioon(a,b)$. This involves showing $a(i) < z(i)$ and $z(i) < b(i)$ for all $i$, considering $i=j$ and $i \neq j$ separately using \tactic{by\_cases}, \tactic{calc} for inequalities, and properties like \tactic{le\_max\_left}, \tactic{max\_lt}.
            \item \tactic{specialize h\_eq\_fn z hz\_mem\_Ioo}: Apply the main hypothesis $\forall x \in \Ioon(a,b), \\ \indicatorURO(x_1, x) = \indicatorURO(x_2, x)$ to our specific $z$. Now \tactic{h\_eq\_fn} is $\indicatorURO(x_1, z) = \indicatorURO(x_2, z)$.
            \item \tactic{have ind1 : indicatorURO x1 z = 1 := by ...}: Prove $\indicatorURO(x_1, z) = 1$.
            \begin{enumerate}
                \item \tactic{unfold indicatorURO}: Expand definition.
                \item \tactic{apply if\_pos}: We need to show $\allGt(z, x_1)$ (i.e., $\forall i', z(i') > x_1(i')$).
                \item \tactic{intro i'}: Take arbitrary $i'$.
                \item \Lean{by\_cases h\_eq\_j' : i' = j}: Case on $i'=j$. If $i'=j$, $z(j)=z_1 > x_1(j)$ by \tactic{hz1\_gt\_x1j}. If $i' \neq j$, $z(i') = (\max(x_1(i'), x_2(i')) + b(i'))/2$. This is greater than $x_1(i')$ because $\max(x_1(i'), x_2(i')) \geq x_1(i')$ and $b(i') > x_1(i')$ (since $x_1 \in \Ioon(a,b)$). Proof uses \tactic{add\_lt\_add\_of\_le\_of\_lt}.
            \end{enumerate}
            \item \tactic{have ind2 : indicatorURO x2 z = 0 := by ...}: Prove $\indicatorURO(x_2, z) = 0$.
            \begin{enumerate}
                \item \tactic{unfold indicatorURO}: Expand definition.
                \item \tactic{apply if\_neg}: We need to show $\neg \allGt(z, x_2)$. This means $\exists i', \neg (z(i') > x_2(i'))$.
                \item \tactic{intro h\_all\_gt\_z\_x2}: Assume $\allGt(z, x_2)$ for contradiction.
                \item \tactic{specialize h\_all\_gt\_z\_x2 j}: This gives $z(j) > x_2(j)$.
                \item \tactic{simp only [z, if\_true] at h\_all\_gt\_z\_x2}: Since $z(j)=z_1$, this becomes $z_1 > x_2(j)$.
                \item \tactic{linarith [hz1\_lt\_x2j, h\_all\_gt\_z\_x2]}: We have $hz1\_lt\_x2j: z_1 < x_2(j)$ and $h\_all\_gt\_z\_x2: z_1 > x_2(j)$. This is a contradiction. \Lean{linarith} resolves it.
            \end{enumerate}
            \item \tactic{rw [ind1, ind2] at h\_eq\_fn}: Substitute results into \tactic{h\_eq\_fn}. It becomes $1=0$.
            \item \tactic{exact absurd h\_eq\_fn (by norm\_num)}: $1=0$ is absurd. \tactic{norm\_num} proves $1 \neq 0$.
        \end{enumerate}
        \item \textbf{Case 2: \tactic{h\_gt : x2 j < x1 j}}
        \begin{enumerate}
            \item The logic is symmetric to Case 1, swapping roles of $x_1$ and $x_2$. This time it will be shown that $\indicatorURO(x_1, z) = 0$ and $\indicatorURO(x_2, z) = 1$, leading to $0=1$, which is also absurd.
        \end{enumerate}
    \end{enumerate}
\end{enumerate}
Since both cases lead to a contradiction, the initial assumption \tactic{h\_neq} ($x_1 \neq x_2$) must be false. Thus $x_1 = x_2$.
 \qed

\subsubsection{Proof Sketch of Lemma \ref{lem:integral_for_indicator_ND} (Integral for Indicator Function ($N$D))}\label{proof:lem:integral_for_indicator_nd}
\textbf{Goal:} Given the specified hypotheses about $u$ being an orthant indicator function defined by $x_0 \in \Ioon(a,b)$, prove that
$\riemannStieltjesND(u, \text{Dist}, a, b, x_0, h_{x_0}', h_u') = \survivalProbN(\text{Dist}, x_0, b)$.

The goal is to show that the integral definition simplifies to $\survivalProbN(\text{Dist}, x_0, b)$.
\begin{enumerate}
    \item \tactic{have hx0\_mem\_Icc : x0 $\in$ Icc\_n a b := Ioo\_subset\_Icc\_self\_n hx0\_mem}: Establishes that $x_0 \in \Iccn(a,b)$ using Lemma \ref{lem:Ioo_subset_Icc_self_n}, since \tactic{hx0\_mem} states $x_0 \in \Ioon(a,b)$. This serves as \tactic{$h_{x_0}'$}.
    \item \tactic{have h\_match : $\exists$ x0' $\in$ Icc\_n a b, $\forall$ x $\in$ Ioo\_n a b, u x = indicatorURO x0' x := by ...}: This proves the condition in the \tactic{if} statement of the $\riemannStieltjesND$ definition.
    \begin{enumerate}
        \item \tactic{use x0, hx0\_mem\_Icc}: We claim $x_0$ is the $x_0'$ that satisfies the condition. We provide $x_0$ and the proof \tactic{hx0\_mem\_Icc} that $x_0 \in \Iccn(a,b)$.
        \item \tactic{intro x hx\_open}: We need to show $\forall x \in \Ioon(a,b), u(x) = \indicatorURO(x_0, x)$. So, take an arbitrary $x$ and assume $x \in \Ioon(a,b)$ (\tactic{hx\_open}).
        \item \tactic{have hx\_closed : x $\in$ Icc\_n a b := Ioo\_subset\_Icc\_self\_n hx\_open}: Show that this $x$ is also in $\Iccn(a,b)$ using Lemma \ref{lem:Ioo_subset_Icc_self_n}.
        \item \tactic{exact h\_u\_def x hx\_closed}: The main hypothesis \tactic{h\_u\_def} states $\forall y  \in \Iccn(a,b),\\ u(y) = \indicatorURO(x_0, y)$. Since $x \in \Iccn(a,b)$ (by \tactic{hx\_closed}), we can apply \tactic{h\_u\_def} to $x$, yielding \\ $u(x) = \indicatorURO(x_0, x)$, which is the goal. This part also serves as \tactic{$h_u'$}.
    \end{enumerate}
    \item \tactic{unfold riemannStieltjesIntegralND}: Expand the definition of $\riemannStieltjesND$. It is an \tactic{if} statement.
    \item \tactic{simp only [h\_match, dif\_pos]}:
    \begin{enumerate}
        \item \tactic{h\_match} is the proof that the condition of the \tactic{if} statement is true.
        \item \tactic{dif\_pos} is a lemma used to simplify an \tactic{if h : c then t else e} to \tactic{t} when \tactic{h : c} (i.e., $c$ is true).
        \item So, the expression $\riemannStieltjesND \dots$ simplifies to its first branch, which is $\survivalProbN(\text{Dist}, x_0, b)$. This matches the goal.
    \end{enumerate}
\end{enumerate}
 \qed

\subsubsection{Proof Sketch of Theorem \ref{thm:fsd_equivalence_ND} (FSD Equivalence for Indicator Functions ($N$D))}\label{proof:thm:fsd_equivalence_nd}
\textbf{Goal:} Prove the equivalence between the survival probability characterization of FSD and the expected utility characterization using orthant indicator utility functions.

The proof of equivalence is split into two directions.
\begin{enumerate}
    \item \tactic{constructor}: This tactic tells Lean to prove both implications of the $\iff$ statement.
    \item \textbf{Forward Direction ($\Rightarrow$):}
    Assume $\forall x_0 \in \Ioon(a,b), \survivalProbN(F, x_0, b) \geq \survivalProbN(G, x_0, b)$ (hypothesis \tactic{h\_survival\_dominance}).
    We need to show $\forall x_0 \in \Ioon(a,b),\\ \riemannStieltjesND(\indicatorURO(x_0), F, \dots)\\ \geq \riemannStieltjesND(\indicatorURO(x_0), G, \dots)$.
    \begin{enumerate}
        \item \tactic{intro x0 hx0\_mem}: Introduce an arbitrary $x_0$ and the hypothesis $x_0 \in \Ioon(a,b)$ (\tactic{hx0\_mem}).
        \item \tactic{let u := indicatorURO x0}: Define $u$ to be $\indicatorURO(x_0)$.
        \item \tactic{$\forall x \in Icc\_n a b, u x = indicatorURO x0 x := fun x_ \Rightarrow rfl$}: This states that $u$ is indeed the indicator function for $x_0$. \tactic{rfl} (reflexivity) suffices because $u$ is defined as $\indicatorURO(x_0)$. The arguments \tactic{x} and \tactic{\_} (an unused hypothesis that $x \in \Iccn(a,b)$) are for the universal quantifier.
        \item \tactic{have calc\_int\_F : riemannStieltjesND u F a b x0 ... = survivalProbN F x0 b := by apply integral\_for\_indicatorUpperRightOrthant hab hx0\_mem h\_u\_def}: By Lemma \ref{lem:integral_for_indicator_ND}, the integral for $F$ simplifies to the survival probability for $F$. The arguments \tactic{hab}, \tactic{hx0\_mem}, and \tactic{h\_u\_def} satisfy the premises of the lemma.
        \item \tactic{have calc\_int\_G : riemannStieltjesND u G a b x0 ... = survivalProbN G x0 b := \\by apply integral\_for\_indicatorUpperRightOrthant hab hx0\_mem h\_u\_def}: Similarly for $G$.
        \item \tactic{rw [calc\_int\_F, calc\_int\_G]}: Rewrite the goal using these two established equalities. The goal becomes $\survivalProbN(F, x_0, b) \geq \survivalProbN(G, x_0, b)$.
        \item \tactic{apply h\_survival\_dominance x0 hx0\_mem}: This is exactly the assumption \tactic{h\_survival\_dominance} applied to the current $x_0$ and \tactic{hx0\_mem}.
    \end{enumerate}
    \item \textbf{Backward Direction ($\Leftarrow$):}
    Assume $\forall x_0 \in \Ioon(a,b), \\\riemannStieltjesND(\indicatorURO(x_0), F, \dots)\\ \geq \riemannStieltjesND(\indicatorURO(x_0), G, \dots)$ (hypothesis \tactic{h\_integral\_indicator}).
    We need to show $\forall x_0 \in \Ioon(a,b), \survivalProbN(F, x_0, b) \geq \survivalProbN(G, x_0, b)$.
    \begin{enumerate}
        \item \tactic{intro x0 hx0\_mem}: Introduce an arbitrary $x_0$ and $x_0 \in \Ioon(a,b)$ (\tactic{hx0\_mem}).
        \item \tactic{specialize h\_integral\_indicator x0 hx0\_mem}: Apply the hypothesis \tactic{h\_integral\_indicator} to this specific $x_0$ and \tactic{hx0\_mem}. Now \tactic{h\_integral\_indicator} states\\ $\riemannStieltjesND(\indicatorURO(x_0), F, \dots)\\ \geq \riemannStieltjesND(\indicatorURO(x_0), G, \dots)$ for this particular $x_0$.
        \item \tactic{let u := indicatorURO x0}: Define $u$.
        \item \tactic{have h\_u\_def : $\forall x \in Icc\_n a b, u x = indicatorURO x0 x := fun x _ \Rightarrow rfl$}: As before.
        \item \tactic{have calc\_int\_F : riemannStieltjesND u F a b x0 ... = survivalProbN F x0 b := by apply integral\_for\_indicatorUpperRightOrthant hab hx0\_mem h\_u\_def}: As before.
        \item \tactic{have calc\_int\_G : riemannStieltjesND u G a b x0 ... = survivalProbN G x0 b := by apply integral\_for\_indicatorUpperRightOrthant hab hx0\_mem h\_u\_def}: As before.
        \item \tactic{rw [calc\_int\_F, calc\_int\_G] at h\_integral\_indicator}: Rewrite the terms in the specialized hypothesis \tactic{h\_integral\_indicator} using these equalities.\\ The hypothesis becomes $\survivalProbN(F, x_0, b) \geq \survivalProbN(G, x_0, b)$.
        \item \tactic{exact h\_integral\_indicator}: This transformed hypothesis is exactly what we need to prove for this direction.
    \end{enumerate}
\end{enumerate}
Both directions are proven, so the equivalence holds.
 \qed

\subsubsection{Implementation Challenges and Solutions}
From our experience implementing this geometric framework in Lean 4, several practical challenges arose:

\begin{itemize}
       \item \textbf{Challenge:} Managing proof complexity when working with existential propositions about functions matching the orthant indicator form.

    \textbf{Solution:} Structured the specialized integral definitions to accept explicit proof arguments (the hypotheses $h_{x_0}$ and $h_u$), avoiding repeated complex existence proofs.

    \item \textbf{Challenge:} Balancing the need for classical logic with the desire for extractable computational content.

    \textbf{Solution:} Isolated classical reasoning to specific components, leaving core algorithms that check empirical FSD amenable to extraction.
\end{itemize}

These implementation insights highlight how the geometric approach not only simplifies the mathematical theory but also leads to more practical and maintainable formal verification code in Lean 4.

\section{Acknowledgments}
I would like to thank the Mathlib community for their continuous development of mathematical libraries in Lean, which made this work possible.

\end{document}